\date{}
\newtheorem{defi}{Definition}[section]
\newtheorem{thm}{Theorem}[section]
\newtheorem{rem}{Remark}[section]
 \newtheorem{prop}{Proposition}[section]
\newtheorem{lemma}{Lemma}[section]
\newtheorem{cor}{Corollary}[section]
\def\supp{\mathop{\rm supp}}
\newcommand{\tr}{{\rm tr}}
\newcommand{\R}{\mathbb R}
\newcommand{\F}{F}
\begin{document}

\title[Agmon on Graphs]
{On Agmon metrics and exponential localization for quantum graphs}

\author{Evans M. Harrell II}
\address{School of Mathematics,
Georgia Institute of Technology,\
Atlanta GA 30332-0160, USA.} \email{ harrell@math.gatech.edu}

\author{Anna V. Maltsev}
\address{Department of Mathematics,
University of Bristol,\
Bristol BS81SD, UK.} \email{annavmaltsev@gmail.com}

\thanks{}

\begin{abstract}
We investigate the rate of
decrease at infinity of eigenfunctions of quantum graphs by using Agmon's method to
prove $L^2$ and $L^\infty$ bounds on the product of an eigenfunction with the exponential of a certain metric.
A generic result applicable to all graphs is that the exponential rate of decay
is controlled by an adaptation of the standard estimates for a line, which are of
classical Liouville-Green
(WKB) form.  Examples reveal that this estimate can be the best possible, but that a more rapid rate of decay
is typical when the graph has additional structure.  In order to understand this fact, we present two alternative estimates under more restrictive assumptions on the graph structure
that pertain to a more rapid decay.
One of these depends on
how the
eigenfunction is distributed
along a particular chosen path, while the other applies to an average of the
eigenfunction over
edges at a given distance from the root point.
\end{abstract}

\keywords{Quantum graph, Agmon metric}

\maketitle

\section{Introduction}

The goal of this article is to study the rate of decrease of eigenfunctions on infinite quantum graphs
$(\Gamma, H)$, as the distance (arc length along edges) from a designated root
point $0$
goes to infinity.
In fact our results apply more generally to any
$L^2 \cap AC^1$
solution of the
eigenvalue equation
\begin{equation}\label{EVP}
H \psi := - \psi^{\prime\prime} + V(x) \psi = E \psi
\end{equation}
on infinite subgraphs $\Gamma_0 \subset \Gamma$,
which can be disconnected from $\Gamma$
by the removal of a compact subset of the graph $S$,
and on
which
the requisite conditions
are satisfied at the vertices,
without regard to what happens on other portions of the full graph
$\Gamma$.
(For us a subgraph $\Gamma_0$ is taken to consist of a subset
$\mathcal{V}_0$ of the vertices $\mathcal{V}(\Gamma)$ along
with a subset of edges $\mathcal{E}(\Gamma)$ connected to vertices in $\mathcal{V}_0$.
A compact subset is
closed and contained within the union of a finite set of vertices and edges, with finite total edge length.)
For brevity we shall refer to
solutions of \eqref{EVP} on infinite subgraphs
as \textbf{exterior eigenfunctions}.

The philosophy introduced by Agmon \cite{Agm}
for proving exponential localization of (exterior) eigenfunctions of
equations like $(-\Delta + V({\bf x})) \psi({\bf x}) = E \psi({\bf x})$ on
$\R^d$ is to introduce an
\textbf{Agmon multiplier} $F({\bf x})$
in terms of which any
solution $\psi \in L^2$ on an exterior domain
$\mathcal{D}$ must
additionally satisfy $F \psi \in L^2(\mathcal{D})$.
We in fact go beyond this
and show that $F \psi$ is finite in Sobolev norm.
The canonical case is when $F = e^{\rho(0, {\bf x})}$,
or, typically, $F = e^{(1-\epsilon)\rho(0, {\bf x})}$,
for an Agmon metric $\rho(0, {\bf x})$
that tends to infinity
as ${\bf x} \to \infty$.
With some further effort such integrated bounds
establish exponential decrease in the pointwise sense.
A great success of the Agmon method was to extend to the
case of PDEs some estimates which, in sharper
{one-dimensional versions,}
date from Liouville and Green and are more widely
known as the WKB approximation in semiclassical analysis.
We refer to \cite{Agm,HiSi} for a full account of the Agmon method and
to \cite{Olv} for a definitive account of the
Liouville-Green approximation with rigorous error control.

Quantum graphs offer an interesting middle ground between the one-dimensional
and higher-dimensional situations, and our intent here is to explore how the Agmon
philosophy can be adapted to prove exponential
localization of eigensolutions of quantum graphs.
We are unaware of previous treatments of this question, although localization
on quantum graphs in different contexts has been considered in, e.g.,
\cite{Gei,HiPo} and references therein.

We recall that a quantum graph consists of a metric graph $\Gamma$ for which a one-dimensional
Schr\"odinger operator
$$
H = - \frac{d^2}{dx^2} + V(x)
$$
is applied to functions on the edges $e$, and vertex conditions connecting the values on adjacent
edges are imposed.
The Hamiltonian $H$ of the quantum graph is
defined as a self-adjoint operator on
$$
L^2(\Gamma) := \oplus_{e}L^2(e, dx)
$$
by extending a symmetric quadratic form initially given
on the Sobolev space $H^1(\Gamma)$,
which is by definition the subspace of
$\oplus_{e}H^1(e)$ consisting of functions
that are continuous at the vertices,
cf. \cite{BeKu}, Definition 1.3.6.  The norm on
$H^1(\Gamma)$ is defined by $\|\phi\|_{H^1}^2 =
\sum_{e}\int{(|\phi^\prime|^2 + |\phi|^2) dx}$.
Following \cite{BeKu,KoSc,Kuc}, we shall assume a global lower bound
{$\ell_{\rm min}$} to the length of the edges
of $\Gamma$ and choose the potential-energy function $V(x)$ to be continuous and to
satisfy
\begin{equation}\label{assum1}
V(x) > E > -\infty
\end{equation}
for some eigenparameter $E$, outside a finite part of the graph.  (Throughout this article $E$ will be treated as a fixed parameter.)
The vertex conditions we impose
will be of Kirchhoff type, {\it viz}.

\begin{defi}\label{Kdef}
For any connected subgraph $\Gamma_0 \subset \Gamma$,
$\mathcal{K}(\Gamma_0)$ will designate the set of functions
$f \in AC^1$
on all edges (considered as open sets) in $\Gamma_0$,
such that at each vertex $v$, $f$ is continuous, and
\begin{equation}\label{Keq}
\sum_{e {\sim} v} f_e^{\prime}(v^+) = 0.
\end{equation}
The notation in the summand indicates the limit
along the edge $e$ of $f^{\prime}(x)$as $x \to v$,
calculated in the outgoing orientation from $v$.  We shall dispense with the explicit reference to $\Gamma_0$ when the
context is clear.  We refer to the functions in $\mathcal{K}(\Gamma_0)$ as satisfying
{\rm Kirchhoff conditions}.
\end{defi}

In our analysis, the edges of $\Gamma_0$ will
sometimes be oriented with increasing distance from a
root point $0$, according to a metric $\rho$ on the graph.  This will be termed
the \textbf{distance orientation} according to the metric $\rho$.
If the graph contains cycles, the assignment of
edge orientation will break down for any edge containing a
$x_c$ at the same distance from $0$ by two or more distinct paths.  We refer to any
{$x_c$ connected to
$0$ by two or more distinct paths with the same distance $\rho(x_c,0)$ as a
{\em cut point}, and find it convenient to regard any edge containing an interior cut point as
a pair of distinct oriented}
edges joined by a degree-2 vertex located at $x_c$. In particular, in Definition \ref{d:ave} we use \textbf{Euclidean distance} by which we mean the Euclidean length of a path. Furthermore, in our integrals $dx$ e.g. in \eqref{e:e} and \eqref{e:e2} the orientation plays no role and the integrals could be taken with respect to arclength $ds$. However, $dx$ was chosen for ease of notation.

Given an edge orientation defined by a metric,
the Kirchhoff condition at a vertex $v$, if not also a cut point, reads
$$
\sum_{e {\sim} v, \rho(e) \ge v} f_e^{\prime}(v^+) = f_{{\rm incoming}\,e}^{\prime}(v^-).
$$
Thus each daughter edge $e_\ell$ spawned at $v$ carries forward
a certain \textbf{fraction of the incoming derivative}
\begin{equation}\label{e:pl}p_{v_\ell} := f_{e_\ell}^{\prime}(v^+)/f_{{\rm incoming}\,e}^{\prime}(v^-)
\end{equation} with $\sum_\ell{p_{v_\ell}} = 1$. In this paper we will sometimes work with a path $P$ from a root to $\infty$. In this case, only one edge spawned at a vertex will lie on the path $P$ and the subscript $\ell$ can be dropped from the notation for  $p_{v_\ell}$. The quantities $p_v$ will be crucial in Theorem \ref{t:pbound}.

We note that the Kirchhoff conditions are analogous to classical Neumann boundary conditions in that they are the boundary conditions associated by the Friedrichs extension with the quadratic form
\eqref{EVP}, {\em viz}.
\begin{equation}\label{qform}
\phi \to \sum_{e}\int_e{\left(|\phi^\prime|^2 + V(x)|\phi|^2 \right)dx}
\end{equation}
for $\phi \in H^1(\Gamma)$
According to \cite{Kuc}, Theorem 9,
when $V=0$ this extension is a
nonnegative self-adjoint operator, since the vertex operator included there vanishes.  By
\cite{RS2}, \S X.3, the same is true when we add a potential that is bounded from below.
All quantum graphs considered in this article are defined by such Friedrichs extensions.
We first prove several basic properties of exterior eigensolutions. We show that solutions exist for general graphs in Proposition \ref{p:existence}, and we offer a partitioned uniqueness result for graphs with additional structure in Corollary \ref{c:partuniq}. Furthermore, in Section \ref{s:ladder} we offer an example demonstrating that such solutions are not unique in general.

Then, implementing an Agmon-style argument,
we will show that $L^2$ solutions multiplied by a certain growing function $F$, will still have a finite Sobolev norm.
Our
key technical estimate is
the following:
\begin{prop}\label{t:Sobbd}
Let $\Gamma_0$ be a subgraph of $\Gamma$ and assume that $\liminf V(x, \psi) - E > 0$ on
$\Gamma_0 \setminus S_1$, where $S_1$ is compact.
Suppose that $\psi \in L^2(\Gamma_0)$
and that on each edge of $\Gamma_0 \setminus S_1$, $0 < \psi \in AC^1$ and
$\psi^{''} \ge (V(x, \psi) -E) \psi$.  Let a function $F$  defined on each edge be such that $0 < \F \in AC^1$ and for some $\delta > 0$,
\begin{equation}\label{e:constraint}
V(x, \psi) -E - \left(\frac{\F^\prime}{\F}\right)^2 \ge \delta.
\end{equation}
Then we have the following bound on the Sobolev norm of $F \psi$:
\begin{equation}\label{e:integSobbd}
\|F \psi\|_{H^1(\Gamma_0)}^2  \le C_1\left( \|\psi\|_{H^1(S_2)}^2 +
%\frac 1 \delta
\sum_{v \in \Gamma_0\setminus S_2}\sum_{e \in \Gamma_0 \setminus S_2, e \sim v}F^2 \psi_e \psi_e^\prime (v+)\right),
\end{equation}
where $S_2$ is a compact set such that $S_1 \subset S_2$ and $S_2 \backslash S_1$ contains no vertices, and $\psi_e^\prime (v+)$ indicates the outward derivative.
\end{prop}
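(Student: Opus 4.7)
The plan is to carry out the classical Agmon multiplier argument edge by edge and then sew the local estimates together at the vertices by means of the Kirchhoff condition. Since $\psi > 0$ on each edge of $\Gamma_0 \setminus S_1$ and $F > 0$, one may multiply the differential inequality $\psi'' \ge (V - E)\psi$ through by the nonnegative quantity $F^2 \psi$ without altering its direction, obtaining $F^2 \psi\psi'' \ge (V - E)F^2 \psi^2$ pointwise on every edge of $\Gamma_0 \setminus S_2$.

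Next, on each such edge I would integrate by parts twice. Writing $F^2 \psi \psi'' = (F^2 \psi \psi')' - 2FF'\psi\psi' - F^2 (\psi')^2$ and using the identity
\begin{equation*}
((F\psi)')^2 = (F')^2 \psi^2 + 2FF'\psi\psi' + F^2 (\psi')^2,
\end{equation*}
integration over an edge $e$ and rearrangement produce the Agmon-type bound
\begin{equation*}
\int_e ((F\psi)')^2 \, dx + \int_e \left[(V - E) - (F'/F)^2\right] F^2 \psi^2 \, dx \;\le\; \bigl[F^2 \psi \psi'\bigr]_{\partial e}.
\end{equation*}
Invoking the hypothesis \eqref{e:constraint} on $\Gamma_0 \setminus S_2$, the bracket on the left is bounded below by $\delta F^2 \psi^2$, so summing over all edges of $\Gamma_0 \setminus S_2$ yields $\min(1, \delta) \|F\psi\|_{H^1(\Gamma_0 \setminus S_2)}^2 \le \sum_e [F^2\psi\psi']_{\partial e}$, which gives the $C_1$ of the proposition.

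The boundary sum must then be interpreted. At any vertex $v$ lying in the interior of $\Gamma_0 \setminus S_2$, with all incident edges also in $\Gamma_0 \setminus S_2$, the common value $F(v)^2 \psi(v)$ factors out of the sum over incident edges and the remaining factor $\sum_{e \sim v} \psi_e'(v^+)$ vanishes by the Kirchhoff condition \eqref{Keq}. The surviving boundary contributions are precisely those collected in the vertex sum on the right-hand side of \eqref{e:integSobbd}, together with the boundary points that lie on $\partial S_2$. Because $S_2 \setminus S_1$ contains no vertices, each point of $\partial S_2$ sits in the interior of an edge, and on the enclosing edge segment in $S_2 \setminus S_1$ a standard one-dimensional trace inequality, combined with continuity of $F$ on the compact set $S_2$, bounds the pointwise value $F^2 \psi \psi'$ there by a constant multiple of $\|\psi\|_{H^1(S_2)}^2$. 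Finally, the trivial compact-set estimate $\|F\psi\|_{H^1(S_2)}^2 \le \sup_{S_2}(F^2 + (F')^2)\, \|\psi\|_{H^1(S_2)}^2$ absorbs the missing $S_2$-part of the Sobolev norm, completing the inequality.

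The principal obstacle I anticipate is not the integration by parts itself but the careful accounting of boundary terms. One must check that the Kirchhoff cancellation at interior vertices of $\Gamma_0 \setminus S_2$ is not spoiled by the weight $F^2$ (it is not, since $F$ is continuous across vertices and can be pulled out of the sum), and one must have chosen $S_2$ so that $\partial S_2$ cuts edges cleanly and avoids vertices, so that the trace estimate at $\partial S_2$ really can be absorbed into the $\|\psi\|_{H^1(S_2)}^2$ term without interacting with the vertex structure.
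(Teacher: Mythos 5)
Your edge-wise integration by parts is sound and essentially reproduces the identity behind the paper's Lemma \ref{HS3.6altOLD}; the difference is that you work with a sharp cut at $\partial S_2$ instead of the paper's smooth cutoff $\eta$, and this is where the argument runs into two genuine problems. First, your cancellation of the boundary terms at interior vertices uses the Kirchhoff condition for $\psi$ and the continuity of $F$ across vertices, and neither is a hypothesis of Proposition \ref{t:Sobbd}: $\psi$ is only assumed to be a positive $AC^1$ subsolution edge by edge, and $F$ is only assumed positive and $AC^1$ on each edge separately. Indeed, in the key application (Theorem \ref{t:pbound}) the multiplier $F_P$ jumps at every vertex, and the paper explicitly cautions that $F\psi$ need not be continuous there. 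The vertex sum on the right-hand side of \eqref{e:integSobbd} exists precisely because no vertex conditions are imposed at this stage; the cancellation you perform is done later, separately in each application, by choosing $F$ so that these terms vanish. As written, your text is also internally inconsistent: you first cancel the contributions at the vertices of $\Gamma_0\setminus S_2$ and then assert that the surviving contributions are "precisely those collected in the vertex sum" of \eqref{e:integSobbd} --- but those are the same terms. The repair is simply not to cancel anything: convert the coordinate boundary terms $[F^2\psi\psi']_{\partial e}$ into outward derivatives at each vertex (keeping careful track of orientation signs) and carry them along; that is exactly the stated right-hand side.

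Second, the sharp cut at $\partial S_2$ leaves pointwise flux terms $F^2\psi\psi'$ at points interior to edges, and the claim that a "standard one-dimensional trace inequality" bounds these by a multiple of $\|\psi\|_{H^1(S_2)}^2$ is false as stated: the $H^1$ norm on a compact segment controls pointwise values of $\psi$, but not pointwise values of $\psi'$. To close this step you need extra input, e.g.\ exploiting the convexity $\psi''\ge(V-E)\psi\ge\delta\psi>0$ of the positive subsolution to bound $|\psi'(x_0)|$ by difference quotients of $\psi$ on an adjacent segment, or choosing the cut points by a mean-value argument. The paper avoids the issue altogether: it inserts a smooth cutoff $\eta$ vanishing on $S_1$, equal to $1$ off $S_2$, with $\supp\eta'$ containing no vertices, and then (Lemma \ref{HS3.7alt}) the subsolution property yields the pointwise bound $C_0\chi_{\supp\eta}\,\psi^2+G'$ with $G=-\tfrac12 (F\psi)^2(\eta^2)'$ an exact derivative that integrates to zero on every edge, so only $L^2$ quantities of $\psi$ on the compact transition region appear and no trace estimate is ever required. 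Until the vertex terms are retained rather than cancelled and the $\partial S_2$ flux is controlled by one of these devices, the proposed proof does not establish \eqref{e:integSobbd}.
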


We caution that although $\F \psi$ has a finite Sobolev norm, we have as yet said nothing about its continuity at the vertices, without which it may not belong to the space $H^1(\Gamma)$.
We shall construct three different possibilities for $F$. In all of them, we ensure that the second term on the right side of (\ref{e:integSobbd}) vanishes. Our aim is to construct an
$\F$ that grows as rapidly as possible under the constraint (\ref{e:constraint})
in order to obtain the best control on $\psi$.

In our first estimate, we show that exterior
eigenfunctions exhibit at least as rapid exponential decay
estimates as is the case for the line. Since one would expect, correctly, that
the familiar one-dimensional
Liouville-Green
expressions will play a central role in extending Agmon's method to quantum graphs,
we introduce
notation for the metric that corresponds to the Liouville-Green approximation,
extending
the definition of the standard
Agmon metric to the setting of
graphs in the following way:
\begin{defi}
Let the
\textbf{classical action metric}
from $0$ to $x$
be given by
\begin{equation}
\rho_a(y,x;E) := \min_{{\rm \, paths} \, P\, y \,{\rm to}\, x} \int_{P} (V(t) - E)_+^{1/2}dt.
\end{equation}
For simplicity we usually
set $y= 0$ for a designated root point and when clear from the context
write $\rho_a(x;E)$ or $\rho_a(x)$ in place of $\rho_a(0,x;E)$.
\end{defi}

Our first application of Proposition \ref{t:Sobbd}
shows that the classical action estimate applies universally to quantum graphs as a
bound on the Sobolev norm.  (In fact the improvement from
an $L^2$ estimate to a Sobolev estimate in the following also goes through in the
classical cases \cite{Agm,HiSi}, but this is not widely
remarked upon.)

\begin{thm} \label{t:1D}
Suppose that $\Gamma_0 \subset \Gamma$ is
a connected, infinite subgraph on which
$\liminf(V(x)- E) > 0$.
If
$\psi \in L^2(\Gamma) \cap \mathcal{K}(\Gamma_0)$ satisfies
$$
- \psi^{\prime\prime} + V(x) \psi = E \psi
$$
on the edges of $\Gamma_0$,
then for any $\delta < \liminf(V-E)$,
\begin{equation}\label{L2est}
e^{\rho_a(x;E-\delta)} \psi \in H^1(\Gamma_0){\cap L^\infty(\Gamma_0)}.
\end{equation}
\end{thm}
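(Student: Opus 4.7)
The strategy is to apply Proposition~\ref{t:Sobbd} with the Agmon multiplier $F(x) := e^{\rho(x)}$, where $\rho(x) := \min_{P} \int_P (V(t) - E - \delta)_+^{1/2}\,dt$ (minimum over paths from the root to $x$) is the classical action metric at the shifted energy $E+\delta$; this matches the decay exponent in~\eqref{L2est} up to a relabeling of the shift parameter. Once Proposition~\ref{t:Sobbd} provides an $H^1$ bound on $F\psi$, the $L^\infty$ bound follows edge-by-edge from the one-dimensional embedding $H^1(e) \hookrightarrow L^\infty(e)$, whose constant can be made uniform in $e$ by using the edge-length lower bound $\ell_{\rm min}$.

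The hypotheses of Proposition~\ref{t:Sobbd} require a positive real supersolution, whereas $\psi$ may be complex and sign-changing. I would reduce to the positive case by splitting $\psi = \psi_1 + i\psi_2$ (each $\psi_j$ is itself a Kirchhoff eigenfunction, since $V$ and $E$ are real) and applying the standard regularization $u_\eta := (\psi_j^2 + \eta^2)^{1/2} > 0$, for which a direct computation gives $u_\eta'' \ge (V-E) u_\eta$ modulo an error that vanishes as $\eta \to 0$. Proposition~\ref{t:Sobbd}, applied with constraint margin $\delta/2$ to absorb the error, yields $F u_\eta \in H^1(\Gamma_0)$, and passing to the limit gives $F|\psi_j| \in H^1$; on each interval of constant sign of $\psi_j$ this equals $\pm F\psi_j$, so $F\psi_j \in H^1(\Gamma_0)$, and hence $F\psi \in H^1(\Gamma_0)$. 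The constraint~\eqref{e:constraint} itself is immediate: by construction $|\rho'| \le (V-E-\delta)_+^{1/2}$ on each edge (after placing degree-two auxiliary vertices at interior cut points, as the paper permits), so $(F'/F)^2 \le V - E - \delta$.

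The final ingredient is the vanishing of the vertex boundary term in~\eqref{e:integSobbd}. Because $\rho$ is a single-valued infimum over paths from the root, $F$ is continuous across each vertex $v$. Combined with continuity of $\psi$ and the Kirchhoff condition $\sum_{e \sim v}\psi_e'(v^+)=0$, one has
\[
\sum_{e \sim v} F^2\,\psi_e\,\psi_e'(v^+) \;=\; F(v)^2\,\psi(v) \sum_{e \sim v}\psi_e'(v^+) \;=\; 0
\]
at every $v \in \Gamma_0 \setminus S_2$, provided $S_2$ is chosen so that all edges meeting such a $v$ lie entirely outside $S_2$. Proposition~\ref{t:Sobbd} then gives $\|F\psi\|_{H^1(\Gamma_0)}^2 \le C_1 \|\psi\|_{H^1(S_2)}^2 < \infty$, and the uniform one-dimensional Sobolev embedding on each edge upgrades this to the claimed $L^\infty$ bound.

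The main obstacle in this plan is the reduction to a positive supersolution together with the $\eta \to 0$ limiting argument, in particular the step of recovering $F\psi \in H^1$ in the complex case from the regularized estimate on $Fu_\eta$; beyond that, the proof reduces to verifying~\eqref{e:constraint} and the vanishing of the boundary term, both of which follow directly from the definition of the classical action metric and from the Kirchhoff condition.
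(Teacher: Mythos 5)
Your overall route is the paper's own: take $F=e^{\rho}$ with $\rho$ the classical action computed with the shifted integrand $(V-E-\delta)_+^{1/2}$ (this is indeed the reading of the exponent under which \eqref{e:constraint} actually holds, and it is what the paper's proof uses), apply Proposition \ref{t:Sobbd}, make the vertex terms in \eqref{e:integSobbd} vanish using the continuity of $F$ (single-valuedness of the path minimum) together with continuity of $\psi$ and the Kirchhoff condition, and upgrade the $H^1$ bound to an $L^\infty$ bound by a one-dimensional embedding whose constant is uniform over edges thanks to $\ell_{\rm min}$; that last step is exactly the paper's Lemma \ref{l:ptw}.

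The one place you deviate is the reduction to a positive supersolution, and as written that step has a genuine gap. For $u_\eta=(\psi_j^2+\eta^2)^{1/2}$ one computes $u_\eta''-(V-E)u_\eta=\frac{\eta^2}{u_\eta}\bigl(\frac{(\psi_j')^2}{u_\eta^2}-(V-E)\bigr)$, which is negative wherever $(\psi_j')^2<(V-E)(\psi_j^2+\eta^2)$, in particular on the zero set of $\psi_j$; the deficit $(V-E)\eta^2/u_\eta$ cannot be absorbed pointwise into the margin in \eqref{e:constraint}, because near zeros of $\psi_j$ the effective potential $(V-E)(1-\eta^2/u_\eta^2)$ degenerates to $0$, and the zero set need not be compact (the paper notes eigenfunctions with infinitely many nodes, or vanishing on infinite subgraphs, e.g.\ the ladder). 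If instead you carry the deficit through the proof of Proposition \ref{t:Sobbd} as an integrated error, it appears as $\eta^2\sum_e\int F^2\chi^2(V-E)\,dx$ (with $\chi$ the cutoff), an integral over the infinite graph weighted by the exponentially growing $F^2$ that is not known to be finite a priori — it is essentially the quantity being proved — so the $\eta\to0$ limit is not justified as stated. The detour is also unnecessary: positivity enters the argument only through the product inequality $\psi\,\psi''\ge(V-E)\psi^2$ (the first line of the proof of Lemma \ref{HS3.7alt}), which an exact real solution satisfies with equality regardless of sign, and the paper reduces complex $\psi$ to real solutions by splitting into real and imaginary parts at the start of Section \ref{s:Agmon} before applying Proposition \ref{t:Sobbd} directly. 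With that observation replacing the regularization, your proof coincides with the paper's.
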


Generically this decay estimate cannot be improved. In Section \ref{s:ladder}, we discuss the ladder graph which has an exterior  eigenfunction that exhibits the decay in \eqref{L2est}. However, other examples
indicate that more rapid decay is in some circumstances
typical. Our motivating example is the tree or, more generally, a graph which is composed of a union of trees outside of a set of compact support. The illustrative case of a regular tree with equal lengths and equal branching numbers is treated in Section \ref{s:regtree}. In particular, we see in Section \ref{s:regtree} that for a regular tree the exterior eigenfunction is in $L^1$ and not just $L^2$. We generalize this phenomenon in the next theorem, in which we make the crucial assumption that the exterior eigenfunction has a negative derivative. We show in Corollary \ref{c:decrease} that a graph which is a union of trees outside of a set of compact support satisfies this assumption.
We introduce an Agmon metric adapted to
a given path and a given eigenfunction, and this new metric exceeds $\rho_a$ by an additional contribution from the vertices.

\begin{defi}Let $P$ be a path
from the root $0$ to $x$ and let $\psi$ be an exterior eigensolution such that $\psi' < 0$. Suppose that at each vertex
$v\in P$, $p_{v}$ as in \eqref{e:pl} is the fraction of the derivative continuing down edge $e\in P$ that is adjacent to the vertex $v$.
 Let the \textbf{Agmon path metric $\rho_P$} be given by
\begin{equation}\label{e:rhoP}
\rho_P(x, {E}) = \int_P \left((V(t)-E)_+^{1/2} + \frac{1}{2}\sum_{\{v \in P: \, p_v > 0\}}\delta_{v}(t)\log (1/p_v)\right) dt,
\end{equation}
which yields an Agmon multiplier of

$$
F_P(x, {E}) = \left(\prod_{\{v \in P: \, p_v > 0\}} \sqrt{\frac{1}{p_{v}}}\right) e^{\int_P (V(t)-E)_+^{1/2} dt}.
$$

\end{defi}

Using this new version of the Agmon metric we formulate the next theorem, which captures the additional decay particular to a given path.

\begin{thm}\label{t:pbound}
Suppose that $\Gamma_0 \subset \Gamma$ is
a connected, infinite subgraph on which
$\liminf(V(x)- E) > 0$
and that
$\psi \in L^2(\Gamma)\cap \mathcal{K}(\Gamma_0)$ satisfies
$$
- \psi^{\prime\prime} + V(x) \psi = E \psi
$$
on the edges of $\Gamma_0$ and $\psi' < 0$ outside of a set of compact support.  Consider any infinite path $P \subset \Gamma_0$,
on which the fraction of the derivative exiting from a vertex $v$ is designated
$p_{v}$.
Then for any $\delta < \liminf(V-E)$,
$e^{\rho_P(x,E-\delta)}\psi\in L^2(P).$
That is,
\[\sqrt{\prod_{v \in P}\frac{1}{p_{v}}} e^{\rho_a(x, E-\delta)} \psi \in L^2(P)\cap L^\infty(P).\]
\end{thm}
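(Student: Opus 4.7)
The plan is to apply Proposition~\ref{t:Sobbd} with an Agmon multiplier $F$ constructed on all of $\Gamma_0$ whose restriction to the chosen path $P$ equals the path multiplier $F_P$ from~\eqref{e:rhoP}; the $L^2$ and $L^\infty$ bounds on $P$ will then follow by restriction. On every edge of $\Gamma_0\setminus S_1$ I would let $F$ be the positive solution of $F'/F=(V-E-\delta)^{1/2}$, which saturates the constraint~\eqref{e:constraint}. The initial values on the different edges are prescribed recursively, working outward from the root: at each vertex $v\in P$, impose the jump $F(v^+)/F(v^-)=1/\sqrt{p_v}$ on the outgoing path edge and declare $F$ continuous across $v$ on every branch edge leaving $v$; at each vertex $v\notin P$, declare $F$ continuous across $v$. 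By construction, $F|_P=F_P$ (modulo the shift $E\mapsto E-\delta$ absorbed into the metric $\rho_P(x,E-\delta)$).

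The key calculation is then to show that the vertex sum on the right of~\eqref{e:integSobbd} is non-positive. Using the distance orientation and the hypothesis $\psi'<0$, the outward derivative on the incoming edge at $v$ equals $|\psi'_{\mathrm{in}}(v)|>0$, while on each outgoing edge $e$ it equals $-p_{v,e}|\psi'_{\mathrm{in}}(v)|$ with $\sum_{e\text{ out}}p_{v,e}=1$ by Kirchhoff. Hence the vertex contribution at $v$ equals
\[
\psi(v)\,|\psi'_{\mathrm{in}}(v)|\left[F_{\mathrm{in}}^2(v)-\sum_{e\text{ outgoing at }v} p_{v,e}\,F_e^2(v)\right].
\]
At a path vertex, the jump $F_{\mathrm{out},P}^2(v)=F_{\mathrm{in}}^2(v)/p_v$ exactly cancels the $F_{\mathrm{in}}^2(v)$ term in the bracket, leaving $-\sum_{\text{branch }\ell}p_{v,\ell}F_\ell^2(v)\le 0$. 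At a non-path vertex, continuity of $F$ reduces the bracket to $F_{\mathrm{in}}^2(v)(1-\sum_e p_{v,e})=0$. So each vertex contribution is non-positive, and summing preserves the inequality.

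Proposition~\ref{t:Sobbd} then yields $\|F\psi\|_{H^1(\Gamma_0)}^2 \le C_1\|\psi\|_{H^1(S_2)}^2<\infty$, since a non-positive vertex sum only strengthens the estimate. Restriction to $P$ together with $F|_P=F_P$ gives $F_P\psi\in L^2(P)$; the uniform one-dimensional embedding $\|u\|_{L^\infty(e)}^2\le(1+\ell_{\min}^{-1})\|u\|_{H^1(e)}^2$ on each edge of length $\ge\ell_{\min}$, combined with $\sum_e\|F\psi\|_{H^1(e)}^2<\infty$, yields $\|F\psi\|_{L^\infty(e)}\to 0$ as $e$ recedes from the root, hence $F_P\psi\in L^\infty(P)$. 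The main obstacle is the vertex calculation of the previous paragraph: the exact cancellation at path vertices forces the jump factor to be precisely $1/\sqrt{p_v}$ (which is what~\eqref{e:rhoP} dictates), and the non-negativity of the branch contributions depends critically on $\psi'<0$. If $\psi$ were not monotone along outgoing edges, the branch terms could carry either sign and the vertex sum would no longer be controlled; this is why the monotonicity hypothesis is essential and why Theorem~\ref{t:pbound} can go strictly beyond the generic estimate of Theorem~\ref{t:1D}.
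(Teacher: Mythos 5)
Your argument is correct in substance, but it takes a different (and more ambitious) route than the paper. The paper's proof is leaner: it applies Proposition~\ref{t:Sobbd} with $\Gamma_0$ taken to be the path $P$ itself, with $F=e^{\rho_P(\cdot,E-\delta)}$; at each vertex of $P$ only the incoming and the path-outgoing edge appear in the vertex sum of \eqref{e:integSobbd}, and since the outgoing derivative is $p_v$ times the incoming one while $F^2$ jumps by $1/p_v$, the two terms cancel \emph{exactly}, so the boundary contribution vanishes identically and no sign discussion of branch edges is needed; monotonicity enters only through the definition of $\rho_P$ (so that $p_v>0$) and the positivity of $\psi$. The $L^\infty$ statement then follows from Lemma~\ref{l:ptw}, which is essentially your edgewise embedding with $\ell_{\rm min}$. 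You instead extend the multiplier to all of $\Gamma_0$ (continuous off $P$, with the $1/\sqrt{p_v}$ jump only on the path-outgoing edge), apply Proposition~\ref{t:Sobbd} on the whole subgraph, and use $\psi'<0$ together with Kirchhoff to show every vertex contribution is $\le 0$; this is a valid variant, and it actually buys more than the theorem asserts, namely a weighted $H^1$ bound on all of $\Gamma_0$ in which each branch subtree carries the accumulated factor $\prod_{j\le k}p_{v_j}^{-1}$ from its attachment point. The price is that you need the extra ingredients you identified plus two points you should make explicit: (i) $\psi>0$ outside a compact set (needed both for the hypotheses of Proposition~\ref{t:Sobbd} and for the sign of your bracket), which follows from $\psi'<0$ and $\psi\in L^2$, or from Proposition~\ref{p:growingpaths}; and (ii) the vertex sum in \eqref{e:integSobbd} is now an infinite sum of negative terms, so one should say that these terms are simply dropped in the derivation rather than summed (the paper avoids this entirely because its vertex terms are zero term by term). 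Also note the sign convention: your $F'/F=(V-E-\delta)^{1/2}$ is the constraint-\eqref{e:constraint}-compatible choice, which is what the paper's notation $\rho_P(x,E-\delta)$ is intended to mean. With those small clarifications your proof stands, and the comparison with Theorem~\ref{t:1D} at the end of your write-up is accurate.
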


At first sight the utility of this theorem could be questioned because the information it provides
about $\psi$ seems to depend on knowing $\psi$.  However, we shall show in Section \ref{s:case} that it is
sometimes possible to determine the fractions $p_{v}$ from the structure of the graph.  Moreover,
our final theorem, which is a consequence of Theorem \ref{t:pbound}, will eliminate the use of the
$p_{v_j}$ by averaging.
Specifically, for a category of regular graphs,
we shall show that
an average of an exterior eigenfunction over the edges of the same generation always
decreases more rapidly than the one-dimensional estimate of Theorem \ref{t:1D}.

Recall that a rooted tree is regular
in the sense of Naimark and Solomyak \cite{NaSo,Sol} if the vertices and edges occur in
generations at equal distances from the root $0$,
and for each $j = 0, 1, \dots$,
\begin{itemize}
\item  All vertices of the $j$-th generation have the same branching number $b_j$.
\item  The edges emanating onward from a vertex of the $j$-th generation have identical lengths.
\end{itemize}
By convention the root $0$ corresponds to $j=0$.
\medskip

We extend this definition to include certain graphs that may contain cycles, and we allow non-uniformity in the
potential energy in the following definition.
%Indeed, our results apply with some additional complications
%to graphs with a lesser degree of topological regularity, but we reserve precise details for future work.

\begin{defi}\label{d:regular}\label{d:ave}
Consider a rooted graph $\Gamma$
with the distance orientation with respect to Euclidean distance $y$.
Vertices are assumed to occur on all paths from
$0$ whenever $y = v_j$, where $v_j$ is an increasing sequence with $v_{j+1} - v_j \ge \epsilon$ for some $\epsilon > 0$.  This implies that
the edges of each generation have identical lengths, as in the case of regular trees.  In addition,
at each generation $j$,
\begin{itemize}
\item  Every vertex {at $v_j$ has the same ongoing branching number $b_j \ge 2$}.
\item  Every vertex {at $v_j$ has the same arriving branching number $a_j \ge 1$}.
\end{itemize}
Such a metric graph will be termed a \textbf{regular braided graph}; the case of a regular tree corresponds to $a_j = 1$ for all $j$.
We define a quantum graph on a regular $\Gamma$
{with the usual Kirchhoff conditions at the vertices and refer to it}
as having \textbf{regular topology}.
On a graph of regular topology, given an $L^2$ eigensolution $\psi$, we define an \textbf{averaged wave function},
depending on $y \in \R^+$,
by
\begin{equation}\label{e:Psi}
\Psi(y) := \sum_{e: \exists x \in e: {\rm dist}(0,x) = y}{\left(\prod_{j: v_j < x}\left(\frac{a_j}{b_{j}}\right) \right)\psi_e(x),}
\end{equation}
\end{defi}

\noindent
We observe that
$\sum_{e: \exists x \in e: {\rm dist}(0,x) = y}{\left(\prod_{j: v_j < y}\left(\frac{a_j}{b_{j}}\right)\right)} = 1$, thus making our use of the term ``averaged" justified. Indeed $\Psi(y)$ is the average of $\psi(x)$ over all points $x$ at a given Euclidean distance $y$ from the root.

\begin{thm}\label{t:ave}
Suppose that $\Psi$ is the averaged eigenfunction on a quantum graph with regular topology corresponding to a
solution $\psi$ of \eqref{EVP},
for which $\psi \in L^2(\Gamma)\cap\mathcal{K}$,
and that for all $x$ such that ${\rm dist}(0,x) = y$, $V(x) \ge V_m(y)$, where
$\liminf(V_m(y) - E) > 0$.
Define
\begin{equation}\label{aveh}
\F_{\rm ave}(y, E) := \left(\prod_{j: v_j < y} \sqrt{\frac{b_j}{a_j}}\right) e^{\int_0^y \sqrt{V_m(t) - E} \,dt}.
\end{equation}
Then for each
$0 < \delta < \liminf(V_m-E)$,
\[\F_{\rm ave}(y, E-\delta) \Psi \in H^1(\R^+)\cap L^\infty(\R^+).\]
\end{thm}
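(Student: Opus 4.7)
The plan is to reduce the one-dimensional Agmon bound for $\Psi$ on $\R^+$ to the graph bound of Theorem \ref{t:1D} applied to $\psi$ on $\Gamma_0$, by means of a Cauchy--Schwarz inequality. The key accounting is that the branching factor $\prod_{j:v_j<y} b_j/a_j$ built into $F_{\rm ave}$ is exactly the number $N_k$ of points at Euclidean distance $y \in (v_{k-1}, v_k)$ from the root, so that the averaging weights and the multiplier conspire to collapse a one-dimensional integral against $F_{\rm ave}^2$ into a graph integral against $e^{2\rho_a}$.

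First I would invoke Theorem \ref{t:1D} to obtain $e^{\rho_a(x,E-\delta)}\psi \in H^1(\Gamma_0) \cap L^\infty(\Gamma_0)$, which applies because $V \ge V_m \circ \dist$ ensures $\liminf(V-E) \ge \liminf(V_m - E) > 0$. A coarea-type argument gives the crucial lower bound
\[
\rho_a(x, E-\delta) \;\ge\; \int_0^{\dist(0,x)} (V_m(s) - E + \delta)^{1/2}\, ds.
\]
Indeed, for any path $P$ from $0$ to $x$ the distance profile $t \mapsto \dist(0,p(t))$ is $1$-Lipschitz and passes through every value in $[0,\dist(0,x)]$, so combining $V \ge V_m\circ\dist$ with $|(\dist\circ p)'|\le 1$ and a change of variables gives the inequality along every path, hence along the minimizing path.

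The main step is Cauchy--Schwarz on $\Psi(y) = \sum_{i=1}^{N_k} w_k \psi_i(y)$ for $y \in (v_{k-1}, v_k)$, where $w_k = 1/N_k$ are the equal averaging weights that sum to $1$: this gives $|\Psi(y)|^2 \le w_k \sum_i \psi_i(y)^2$. The arithmetic identity $F_{\rm ave}^2(y, E-\delta)\cdot w_k = e^{2 \int_0^y (V_m - E + \delta)^{1/2}\, ds}$, combined with the lower bound on $\rho_a$, yields
\[
F_{\rm ave}^2(y, E-\delta)\, |\Psi(y)|^2 \;\le\; \sum_i e^{2\rho_a(x_i, E-\delta)}\, \psi_i(y)^2,
\]
and integrating in $y$ reassembles the right-hand side as the graph integral $\int_{\Gamma_0} e^{2\rho_a}\psi^2\, dx$, finite by Theorem \ref{t:1D}. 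Hence $F_{\rm ave}\Psi \in L^2(\R^+)$. The identical inequality applied edge-by-edge to $\Psi'(y) = \sum_i w_k \psi_i'(y)$ gives $F_{\rm ave}\Psi' \in L^2(\R^+)$.

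To control $(F_{\rm ave}\Psi)' = F_{\rm ave}'\Psi + F_{\rm ave}\Psi'$ on each interval $(v_{k-1},v_k)$, only the term $F_{\rm ave}'\Psi = (V_m - E + \delta)^{1/2} F_{\rm ave}\Psi$ is new; the Cauchy--Schwarz reduction converts its squared $L^2$ norm into $\int_{\Gamma_0}(V - E + \delta)e^{2\rho_a}\psi^2\, dx$, which is a standard by-product of the Agmon quadratic-form identity (applied with a multiplier $e^{t\rho_a}$ for $t$ slightly less than $1$, as in the proof of Proposition \ref{t:Sobbd}). The $L^\infty(\R^+)$ bound then follows from $H^1(I) \hookrightarrow L^\infty(I)$ on each interval $I = (v_{k-1}, v_k)$, with uniform embedding constant because $v_k - v_{k-1} \ge \epsilon$. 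The main technical subtlety is that $F_{\rm ave}$ has jumps $\sqrt{b_k/a_k}$ at each $v_k$, so $\|F_{\rm ave}\Psi\|_{H^1(\R^+)}$ must be interpreted edge-by-edge in the sense of the remark following Proposition \ref{t:Sobbd}; however, the Cauchy--Schwarz reduction sidesteps any delicate vertex analysis on $\R^+$, since the Kirchhoff conditions have already been processed on the graph side in Theorem \ref{t:1D}.
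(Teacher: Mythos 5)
Your Cauchy--Schwarz reduction is a genuinely different route from the paper's for the $L^2$ part, and there it works: since the weights $w_k=\prod_{j:v_j<y}a_j/b_j$ sum to one, $|\Psi(y)|^2\le w_k\sum_i\psi_i(y)^2$, and the identity $F_{\rm ave}^2(y,E-\delta)\,w_k=e^{2\int_0^y\sqrt{V_m-E+\delta}}$ together with the comparison $\rho_a(x_i,E-\delta)\ge\int_0^{\dist(0,x_i)}\sqrt{(V_m-E+\delta)_+}\,ds$ does reassemble $\int_{\R^+}F_{\rm ave}^2\Psi^2\,dy$ inside $\int_{\Gamma_0}e^{2\rho_a}\psi^2\,dx$, which is finite by Theorem \ref{t:1D}; notably this needs neither the positivity/monotonicity of $\psi$ (Proposition \ref{p:growingpaths}) nor the structure of $\Psi$ that the paper establishes. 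The paper instead proves (Proposition \ref{aveprops}) that $\Psi$ is continuous, decreasing, satisfies $\Psi''\ge(V_m-E)\Psi$ between vertices, and loses exactly the fraction $a_j/b_j$ of its derivative at each $v_j$, and then treats $\Psi$ as a one-dimensional exterior (sub)eigenfunction, feeding it into Theorem \ref{t:pbound}/Proposition \ref{t:Sobbd} with the multiplier $F_{\rm ave}$, whose jumps kill the vertex terms in \eqref{e:integSobbd}.

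The gap is in your treatment of the derivative term, and it is precisely what the paper's route is built to avoid. You split $(F_{\rm ave}\Psi)'=F_{\rm ave}'\Psi+F_{\rm ave}\Psi'$ and estimate the two pieces separately; after Cauchy--Schwarz this requires $\int_{\Gamma_0}(V-E+\delta)\,e^{2\rho_a(\cdot,E-\delta)}\psi^2\,dx<\infty$ and $\int_{\Gamma_0}e^{2\rho_a}(\psi')^2\,dx<\infty$. Neither follows from Theorem \ref{t:1D}, and your appeal to the quadratic-form identity ``with multiplier $e^{t\rho_a}$, $t$ slightly less than $1$'' only yields $\int(V-E)\,e^{2t\rho_a}\psi^2<\infty$ for $t<1$: the weight $e^{2t\rho_a}$ falls short of $e^{2\rho_a}$ by the unbounded factor $e^{2(1-t)\rho_a}$, so it cannot be upgraded to $t=1$. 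Indeed, for unbounded potentials the $t=1$ weighted bounds are false in general: already on a half-line with rapidly growing $V$, the Liouville--Green behavior $\psi\asymp(V-E)^{-1/4}e^{-\int\sqrt{V-E}}$ makes $(V-E)\,e^{2\rho_a}\psi^2\asymp(V-E)^{1/2}$ times a bounded exponential, hence non-integrable, and likewise $e^{2\rho_a}(\psi')^2$; what stays controlled is $\bigl((e^{\rho_a}\psi)'\bigr)^2$, because $F'\psi$ and $F\psi'$ nearly cancel. Proposition \ref{t:Sobbd} retains exactly this cancellation by estimating $((F\psi)')^2$ as a unit; your term-by-term splitting discards it, so your argument closes only under the additional assumption that $V_m$ is bounded, which the theorem does not make. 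The repair is essentially the paper's step: establish the subsolution and jump structure of $\Psi$ (Proposition \ref{aveprops}) and apply Proposition \ref{t:Sobbd} directly to $\Psi$ on $\R^+$ with multiplier $F_{\rm ave}$, which bounds $\int((F_{\rm ave}\Psi)')^2$ wholesale; the $L^\infty$ statement then comes from Lemma \ref{l:ptw}, whereas your interval-wise Sobolev embedding inherits the same gap since it needs the uniform edge-by-edge $H^1$ bound.
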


The rest of the paper is organized as follows. In Section \ref{s:basic}, we prove several basic facts. We first show existence of eigensolutions in general. In the case where the graph is a finite union of trees outside of a set of compact support, we also prove a limited uniqueness and the crucial fact that $\psi' < 0$. In Section \ref{s:Agmon} we offer proofs of our theorems. The final section of this article (Section \ref{s:case}) contains case studies of illustrative examples, including regular trees and trees with a lesser degree of regularity involving two lengths (see \ref{s:2lengths}). We also include a study of the ladder graph, which is not a union of trees outside of a set of compact support, and thus provides a good testing ground for the validity our theorems in general settings.

\section{Existence, uniqueness, and basic properties of eigensolutions}\label{s:basic}

In this section we collect some useful facts about eigensolutions of quantum graphs.
Since we allow $E$ to be an arbitrary real parameter, it might be asked whether $L^2$ solutions exist and, if so, whether they can be characterized with a degree of uniqueness.  We first tackle existence.

\begin{prop}  \label{p:existence}Consider a Hamiltonian on an infinite graph $\Gamma$ satisfying the assumptions in the introduction, and fix a
connected subset  of positive measure $i_0 \subset \Gamma$.
Then for any $E < \inf_{\Gamma \setminus i_0}(V)$, there exists a function $\psi(x) \in L^2(G)$ that satisfies
\begin{equation}\label{EVEQ}
- \psi^{\prime\prime} + V(x) \psi = E \psi
\end{equation}
and the Kirchhoff conditions on $\Gamma \setminus i_0$.
\end{prop}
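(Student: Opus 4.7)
The hypothesis is equivalent to $V - E \ge \mu > 0$ pointwise on $\Gamma \setminus i_0$, where $\mu := \inf_{\Gamma \setminus i_0}(V) - E$. Consequently the quadratic form
\[
Q(\phi) := \int_{\Gamma \setminus i_0}\bigl(|\phi'|^2 + (V-E)|\phi|^2\bigr)\,dx
\]
dominates $\min(1,\mu)\,\|\phi\|_{H^1(\Gamma \setminus i_0)}^2$ and is weakly lower semicontinuous. My plan is to produce $\psi$ on $\Gamma \setminus i_0$ as a weak limit of Dirichlet-type minimizers of $Q$ on a compact exhaustion, and then to extend by $0$ across $i_0$. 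Replacing $i_0$ by a compact connected subset of positive measure---say a small open interval on a single edge of $\Gamma$, with the two endpoints $a,b$ promoted to degree-$2$ vertices---I may assume without loss of generality that $\partial i_0$ is a finite vertex set, since this only enlarges $\Gamma \setminus i_0$ and hence strengthens the conclusion.

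Fix an exhaustion $\Gamma_n$ of $\Gamma$ by connected compact subgraphs with $i_0 \subset \Gamma_n$, and set
\[
\mathcal{A}_n := \{\phi \in H^1(\Gamma_n \setminus i_0)\,:\,\phi \equiv 1 \text{ on } \partial i_0,\ \phi \equiv 0 \text{ on } \partial \Gamma_n\}.
\]
The direct method gives a unique minimizer $\psi_n \in \mathcal{A}_n$ of $Q$. Taking variations $\eta \in H^1(\Gamma_n \setminus i_0)$ that vanish on $\partial i_0 \cup \partial \Gamma_n$, and then integrating by parts edge-by-edge, yields the Euler--Lagrange equation: $-\psi_n'' + V\psi_n = E\psi_n$ on each open edge, while at every interior vertex $v$ the boundary contributions assemble, thanks to the continuity of $\eta$ and the arbitrariness of $\eta(v)$, into the Kirchhoff condition $\sum_{e \sim v}(\psi_n)_e'(v^+) = 0$.

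To pass to the limit, fix once and for all a compactly supported extension $\phi_0 \in H^1(\Gamma \setminus i_0)$ of the boundary value $1$. For all $n$ large enough $\phi_0 \in \mathcal{A}_n$, so $Q(\psi_n) \le Q(\phi_0) =: C$ and coercivity gives the uniform bound $\|\psi_n\|_{H^1(\Gamma_n \setminus i_0)} \le C/\min(1,\mu)$. A diagonal extraction produces a subsequence converging weakly in $H^1$ (and strongly in $L^2$) on every fixed compact piece, and weakly in $L^2(\Gamma \setminus i_0)$ globally, to a function $\psi$. Both the pointwise eigenvalue equation and the Kirchhoff condition at any fixed vertex are linear constraints involving only a bounded neighborhood of that vertex and therefore survive the weak limit; continuity of $V$ then bootstraps the equation back to $AC^1$ regularity edge-by-edge. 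Extending by $0$ on $i_0$ delivers $\psi \in L^2(\Gamma)$.

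The main obstacle---and the place where the reduction to a finite $\partial i_0$ earns its keep---is ruling out $\psi \equiv 0$. Each $\psi_n$ takes the value $1$ at every point of $\partial i_0$, and since weak $H^1$ convergence on a fixed compact neighborhood of $\partial i_0$ upgrades to strong $C^0$ convergence there by the one-dimensional Rellich embedding, the limit $\psi$ also equals $1$ on $\partial i_0$ and is hence nontrivial. A secondary nuisance---that the outer Dirichlet condition ``$\psi_n \equiv 0$ on $\partial \Gamma_n$'' has no limiting counterpart---is harmless, since the proposition imposes no decay at infinity on $\psi$ beyond $L^2$-integrability.
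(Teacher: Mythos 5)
Your overall route---Dirichlet-type minimizers of the form $Q$ on a compact exhaustion, weak limits, and a nontriviality argument at $\partial i_0$---is genuinely different from the paper's proof, which instead perturbs $H$ by a coupling $\alpha w$ with $w$ supported in a subinterval of $i_0$, uses relative form compactness to keep the essential spectrum fixed, and tunes $\alpha$ (via Rayleigh--Ritz and continuity of the spectrum in $\alpha$) so that the prescribed $E$ becomes an eigenvalue of $H+\alpha w$; the resulting eigenfunction solves the unperturbed equation with Kirchhoff conditions off $i_0$. However, your ``without loss of generality'' step contains a genuine gap. The hypothesis $E<\inf_{\Gamma\setminus i_0}(V)$ is tied to the \emph{given} $i_0$: when you shrink $i_0$ to a small interval $i_0'$, the new working domain $\Gamma\setminus i_0'$ contains $i_0\setminus i_0'$, and there the standing assumptions only give that $V$ is continuous and that $V>E$ outside some compact set---so $V-E$ may be negative, with no control on its depth, on a compact portion of the enlarged domain. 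The coercivity bound $Q(\phi)\ge\min(1,\mu)\,\|\phi\|_{H^1}^2$, which you state for the original complement, is then silently applied to the enlarged one; it is exactly what underwrites both ``the direct method gives a unique minimizer'' and the uniform bound $\|\psi_n\|_{H^1}\le C/\min(1,\mu)$. If $V-E$ has a deep enough well in $i_0\setminus i_0'$, the Dirichlet form on $\Gamma_n\setminus i_0'$ admits a test function $\eta_0$ vanishing on the boundary with $Q(\eta_0)<0$, whence $Q(\phi_0+t\eta_0)\to-\infty$ and the infimum over $\mathcal{A}_n$ is $-\infty$: there is no minimizer at all. So the reduction does not merely ``strengthen the conclusion''; it weakens the hypothesis in a way that can break the entire variational scheme.

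The good news is that the rest of your argument does not really need the reduction, and can be repaired so as to stay on $\Gamma\setminus i_0$, where $V-E\ge\mu>0$ genuinely holds. Impose the boundary value $1$ only at a single point (or finite set of points) of $\partial i_0$ lying in a fixed compact piece of the graph, and leave the remainder of $\partial i_0$ unconstrained: the natural conditions satisfied by the minimizers at unconstrained boundary points are precisely the Kirchhoff conditions of the subgraph $\Gamma\setminus i_0$, so nothing required by the proposition is lost; a fixed compactly supported competitor equal to $1$ near the chosen point exists; and your uniform bound, diagonal extraction, nontriviality via the compact embedding $H^1\hookrightarrow C^0$ on edges, and edgewise regularity all go through. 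One further point to write carefully: vertex values of $\psi_n'$ are not weakly continuous functionals on $H^1$, so the Kirchhoff condition should be passed to the limit through the weak (variational) formulation at each vertex---test against functions not vanishing at the vertex, pass to the limit using weak $H^1$ convergence on the star neighborhood, and then integrate by parts using the already-established $AC^1$ regularity of the limit. With these repairs your construction yields the stated conclusion, though note the paper's argument delivers a bit more: an eigenfunction of a compactly perturbed operator, hence Kirchhoff conditions at \emph{all} vertices of $\Gamma$ and the equation everywhere outside a small subinterval of $i_0$.
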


\begin{proof}

As shown in \cite{Kuc}, the assumptions guarantee
that the Friedrichs extension of \eqref{qform} defines a
nonnegative self-adjoint operator.  We now perturb this operator by adding a potential of the form
$$
\alpha w(x),
$$
where $w \ge 0$ is a $C^\infty$ function supported in
a finite subinterval of $i_0 \cap e$ for some edge $e$ (and not identically $0$).
We first note that that $\alpha w(x)$ is a relatively form compact perturbation of $H$ and therefore
leaves its essential spectrum unchanged,
by the following fairly standard argument based on Problem XIII.39 of \cite{RS4}:  It suffices to show that
multiplication by $w$ is a compact mapping from $H^1(\Gamma)$ to its dual space
$H^{-1}(\Gamma)$.  Now, multiplication by $w$ is a bounded map
$H^{1}(\Gamma) \to H_0^{1}(I)$ for some compact interval $I$, and the latter space
is compactly embeddable in $H^{-1}(I)$, which is in turn isomorphic
to a subspace of $H^{-1}(\Gamma)$.
It follows that
$\sigma_{\rm ess}(H + \alpha w) = \sigma_{\rm ess}(H)$.

Therefore,
if $E <0$ is in the spectrum of $\sigma_{\rm ess}(H + \alpha w)$, it is an eigenvalue of finite multiplicity
(\cite{RS4}, \S XIII.4),
implying that there exists an $L^2$ eigenfunction $\psi$ solving
$$
(H + \alpha w) \psi = E \psi
$$
on $\Gamma$.  In particular \eqref{EVEQ} holds outside $i_0$.

Now, since multiplication by $w$ is a bounded operator, the spectrum depends continuously on
$\alpha$ (\cite{RS4}, \S XII.2).  Thus consider
a normalized test function $\varphi$ supported in $\supp w$ and note that
$$
\left\langle\varphi, (H + \alpha w) \varphi \right\rangle = \mathcal{E}(\varphi) + \alpha \int_{i_1}w |\varphi|^2\,dx
$$
tends continuously to $-\infty$ as $\alpha \to -\infty$.  The Rayleigh-Ritz inequality states that
$$
\inf \sigma(H + \alpha w) \le \left\langle\varphi, (H + \alpha w) \varphi \right\rangle,
$$
so by continuity, for any given $E < 0$, there exists a value of $\alpha$ for which
$$
E \in \sigma(H + \alpha w),
$$
which finishes the proof {of existence}.
\end{proof}

As a variant of standard fact, to be found for example in \cite{BeKu}, we note that exterior
eigensolutions are in $H^1(\Gamma)$:
\begin{prop}
If $\psi$ is  an $L^2$ eigensolution of \eqref{EVP} on a subgraph $\Gamma_0$ where
$V-E \ge 0$,
and $\Gamma_0$ can be disconnected from
$S:= \{x: V-E < 0\}$ by the removal of a finite number of points $x_k$, $k = 1, \dots k_{\rm max}$
then  $\psi^\prime \in L^2(\Gamma_0)$ and $\sqrt{V-E} \, \psi \in L^2(\Gamma_0)$.
\end{prop}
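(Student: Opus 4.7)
My plan is to run a standard cutoff-and-integration-by-parts argument adapted to the graph setting. For each $R > 0$ I would fix a cutoff $\chi_R : \Gamma_0 \to [0,1]$ that is identically $1$ on the points of $\Gamma_0$ at graph distance at most $R$ from the root, vanishes at distance $\ge R+1$, is continuous across vertices, piecewise smooth with $|\chi_R'| \le C$ uniformly in $R$, and increases monotonically in $R$. Multiplying $-\psi'' + (V-E)\psi = 0$ by $\chi_R^2 \psi$, integrating by parts edge by edge and summing, gives
\[\sum_{e \subset \Gamma_0}\int_e \bigl(\chi_R^2 (\psi')^2 + 2\chi_R \chi_R' \psi \psi' + \chi_R^2 (V-E)\psi^2\bigr)\,dx = \mathcal{B}_R,\]
where $\mathcal{B}_R$ collects the endpoint evaluations of $\chi_R^2 \psi \psi'$.

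The crucial fact is that at every interior vertex $v$ of $\Gamma_0$ the incident edge contributions assemble into $-\chi_R^2(v)\psi(v)\sum_{e\sim v}\psi_e'(v^+)$, which vanishes by the Kirchhoff condition, and the outer boundary of $\supp \chi_R$ contributes nothing since $\chi_R \equiv 0$ there. Therefore $\mathcal{B}_R$ reduces to a sum over the finitely many cut points $x_k$, and each local term is bounded by $|\psi(x_k)\psi'(x_k^\pm)|$, hence by a constant $B = B(\psi)$ independent of $R$ because $\psi$ is $AC^1$ on each edge near each $x_k$. Applying the Cauchy--Schwarz inequality in the form $|2\chi_R \chi_R'\psi\psi'| \le \tfrac{1}{2}\chi_R^2(\psi')^2 + 2(\chi_R')^2\psi^2$ and rearranging gives
\[\tfrac{1}{2}\int_{\Gamma_0}\chi_R^2(\psi')^2\,dx + \int_{\Gamma_0}\chi_R^2(V-E)\psi^2\,dx \le 2\int_{\Gamma_0}(\chi_R')^2 \psi^2\,dx + B.\]

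The right-hand side is uniformly bounded in $R$: since $(\chi_R')^2 \le C^2$ is supported in the graph annulus $\{R \le d(\cdot,0) \le R+1\}$, its integral against $\psi^2$ is dominated by $C^2 \|\psi\|_{L^2(\Gamma_0)}^2$ and in fact tends to $0$ as $R \to \infty$ by dominated convergence. Both integrands on the left-hand side are nonnegative and $\chi_R^2 \nearrow 1$ pointwise on $\Gamma_0$, so monotone convergence in the limit $R \to \infty$ yields $\psi' \in L^2(\Gamma_0)$ and $\sqrt{V-E}\,\psi \in L^2(\Gamma_0)$ simultaneously. The only step that requires real care is the vertex bookkeeping: one must verify that the boundary terms from integration by parts on incident edges at each interior vertex combine with matching signs to form the Kirchhoff sum and vanish, and that the finitely many surviving terms at the $x_k$ are absorbed into the constant $B$; everything else is routine Cauchy--Schwarz followed by a limiting argument.
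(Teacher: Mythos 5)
Your argument is correct and takes essentially the same route as the paper: multiply the equation by (a cutoff times) $\psi$, integrate by parts edge by edge, let the Kirchhoff conditions annihilate the interior vertex contributions so that only the finitely many cut points $x_k$ survive, and then use nonnegativity of $(\psi')^2$ and $(V-E)\psi^2$ to conclude both memberships at once. The only difference is one of rigor rather than substance: the paper performs the integration by parts over all of $\Gamma_0$ in a single stroke, implicitly discarding any contribution at infinity, whereas your cutoff $\chi_R$, Cauchy--Schwarz absorption, and monotone-convergence limit make that step explicit.
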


\begin{proof}
By integrating by parts and invoking the vertex conditions,
\eqref{EVP} implies that
$$
\int_{\Gamma_0}\left(|\psi^\prime|^2 + (V - E) \psi^2\right) dx = \sum_{k=1}^{k_{\rm max}}(\pm\psi(x_k) \psi^\prime(x_k)),
$$
showing that a finite quantity is the sum of the squared norms of $\psi^\prime$ and $\sqrt{V-E} \psi$.
\end{proof}

In Sturmian theory, the characterization of the solution set is related to unique continuation.
In particular,  if $\liminf V(x) > 0$, then there can only be a finite number of nodes for any
solution of $- \psi^{\prime\prime} + V(x) \psi = E \psi$
on a finite or infinite interval.
In contrast, there are examples of quantum graphs for which the zero set of an eigenfunction
contains intervals or an infinite number of discrete nodes (cf. the ladder example in Section \ref{s:ladder}).  However,
the following proposition
allows a generalization of
the classical statement
about the finite number of nodes:

\begin{prop}\label{p:growingpaths}
Assume that a connected infinite
quantum graph has a minimal edge length
and that outside a
compact subset $S$,
$V(x) - E > 0$.  Assume that $x_0 \notin S$ is a boundary point of the zero set $Z(\psi) := \{x: \psi(x) = 0\}$ of an exterior $L^2$ eigenfunction $\psi.$
(The point $x_0$ is either an isolated node of $\psi$ or else a vertex that abuts an edge on which $\psi$ vanishes identically.)
Then there are at least two
oriented
paths beginning at $x_0$, along each of which $|\psi(x)|$ strictly increases until the path enters $S$.
\end{prop}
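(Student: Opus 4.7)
The plan is to combine two complementary ingredients: the convexity/concavity of $\psi$ implied by $\psi''=(V-E)\psi$ on each edge of $\Gamma\setminus S$, and the Kirchhoff condition at each vertex. The convexity/concavity will propagate $|\psi|$ as a strictly increasing function of arc length along any edge on which $\psi$ enters with matching signs for its value and its outgoing derivative, while Kirchhoff will guarantee that at each vertex we meet outside $S$, at least one outgoing edge continues this monotonic behavior. The result will follow by iterating these two steps starting from two distinct half-edges at $x_0$.

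The first task is to produce two such starting half-edges at $x_0$, and I would do this by case analysis following the parenthetical dichotomy in the statement. If $x_0$ is an isolated node in the interior of an edge, ODE uniqueness rules out $\psi(x_0)=\psi'(x_0)=0$, so $\psi'(x_0)\neq 0$; then $\psi$ takes opposite signs on the two sides of $x_0$, and each side is a starting half-edge with $|\psi|$ increasing away from $x_0$. If $x_0$ is an isolated node located at a vertex $v$, then for every adjacent edge $e$ on which $\psi\not\equiv 0$ we again have $\psi_e'(v^+)\neq 0$ by uniqueness, and the identity $\sum_{e\sim v}\psi_e'(v^+)=0$ forces both positive and negative outgoing derivatives, providing two starting edges. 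If $x_0$ is a vertex abutting an edge on which $\psi\equiv 0$, then zero-edges contribute $0$ to the Kirchhoff sum and the same dichotomy applies to the remaining edges; at least one of these must carry a nontrivial $\psi$ since $x_0$ lies on the boundary of $Z(\psi)$.

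Once a starting half-edge is identified with $\psi>0$ and positive outgoing derivative at its base, the edge-propagation step is a one-variable convexity argument: as long as $\psi>0$ we have $\psi''=(V-E)\psi>0$, so $\psi'$ is strictly increasing and therefore stays positive, hence $\psi$ is strictly increasing; consequently $\psi$ remains positive on the whole edge and $|\psi|$ strictly increases until the far vertex is reached, or until the edge enters $S$. The symmetric case $\psi<0$ works identically. To cross a vertex $v\in\Gamma\setminus S$ at which the path has arrived with $\psi(v)>0$ and positive derivative along the direction of travel, note that the outgoing derivative of the incoming edge at $v$ is negative; Kirchhoff therefore forces some other edge $e^*$ at $v$ to have $f_{e^*}'(v^+)>0$, and on $e^*$ we have $\psi(v)>0$ together with a positive outgoing derivative, so the edge-propagation step applies again. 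Iterating these two steps from each of the two half-edges supplied by the local analysis produces an oriented path along which $|\psi|$ is strictly monotone, and this monotonicity prevents any revisit, so the path is well defined until it enters $S$.

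I expect the main obstacle to be the case analysis at $x_0$, specifically justifying that the parenthetical dichotomy exhausts the possibilities for a boundary point of $Z(\psi)$; this depends on the ODE unique-continuation principle, which promotes local vanishing of $\psi$ on any sub-interval of an edge to vanishing on the entire edge. A secondary subtlety is that the vertex-continuation step requires $\psi(v)\neq 0$ at every vertex encountered along the path; this holds because strict increase of $|\psi|$ on the incoming edge starts from a strictly positive value at the previous vertex, while on the very first edge the minimum edge length ensures that $|\psi|$ has a positive interior on the edge before the next vertex is reached.
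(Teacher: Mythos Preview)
Your proposal follows essentially the same strategy as the paper's proof: a local case analysis at $x_0$ to launch two half-edges with $\psi$ and the outgoing derivative of the same sign, convexity propagation along each edge, and Kirchhoff continuation at each vertex. That core iteration is correct and matches the paper almost step for step.

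There is, however, one genuine gap. The statement asserts not merely that $|\psi|$ increases for as long as the path stays outside $S$, but that each path actually \emph{enters} $S$; indeed, the corollaries immediately following the proposition depend on this (``at most a finite number of paths can enter $S$''). Your write-up stops at ``the path is well defined until it enters $S$'' and never excludes the alternative that the iteration runs forever inside $\Gamma\setminus S$. The paper closes this by invoking the two hypotheses you never use: the minimum edge length $\ell_{\min}$ and the assumption $\psi\in L^2(\Gamma)$. If the chain $\mathcal{E}_0,\mathcal{E}_1,\dots$ contained infinitely many edges, then after the first edge $|\psi|$ is bounded below by a fixed positive constant on each subsequent edge of length at least $\ell_{\min}$, contradicting $\psi\in L^2$. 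You should add this termination argument explicitly; without it the conclusion is not established.

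A minor point on the case analysis: when $x_0$ is a vertex abutting an edge on which $\psi\equiv 0$, you write that ``at least one'' of the remaining edges carries a nontrivial $\psi$. In fact at least two must, since if only one edge $e$ had $\psi_e\not\equiv 0$, Kirchhoff would force $\psi_e'(v^+)=0$, and together with $\psi(v)=0$ ODE uniqueness would give $\psi_e\equiv 0$. This is what actually produces the two starting directions in that case.
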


\begin{proof}
{As mentioned, $x_0$ could either be a node in the interior of an edge or a vertex, and if a vertex there are two possibilities, \emph{viz}.,
it may touch an edge on which the eigenfunction is identically zero, or else it
touches at least two other
edges on which the eigenfunction is not identically zero.
In any of these cases, by the existence-uniqueness theorem for ODEs
there must be at least one
edge $\mathcal{E}_0$ leaving $x_0$ on which $\psi^\prime(x_0^+) > 0$,
and, due to the Kirchhoff condition, one edge where $\psi^\prime(x_0^+) < 0$.
(In the case where $x_0$ is a node we regard it as a vertex of degree 2 and consider the edge where it is located as two distinct edges.)}
We discuss only the case $\psi^\prime(x_0^+) > 0$, as the argument for the case where $\psi^\prime(x_0^+) < 0$
is the same with a systematic sign difference.
Therefore there is an interval in $\mathcal{E}_0$ of the form $(0,\epsilon)$ in the variable which $=$ distance from $x_0$,
for which $\psi(x) > 0$ and $\psi^\prime(x) > 0$.  Since $V-E > 0$ in $S^c$,
$\psi^{\prime\prime} > 0$, which means that $\psi(x)$ and $\psi^\prime(x)$
must increase on all of $\mathcal{E}_0$.  Call the vertex at
which $\mathcal{E}_0$ terminates $v_1$.  Because of the Kirchhoff condition, there is at least one edge $\mathcal{E}_1\ne\mathcal{E}_0$ emanating from $v_1$ such that  $\psi^\prime(v^+) > 0$.  Repeating the argument, $\psi(x)$ and $\psi^\prime(x)$ increase on all of $\mathcal{E}_1$ and on a continuing chain of edges $\mathcal{E}_k, k = 1, \dots$.  Recall that the length of each edge is bounded from below.  Such a chain can therefore not
{include an infinite number of vertices}, because in that case $\psi \notin L^2(\Gamma)$.  The remaining possibility is that the chain on which $\psi(x)$ and $\psi^\prime(x)$ increase enters $S$.
\end{proof}

\begin{cor}\label{c:decrease}
If $\Gamma$ has only a finite number of cycles, and $\psi$ is an exterior eigenfunction,
then the null set $Z(\psi)$
has at most a finite number of connected components.  Furthermore, there exists a compact set
$S$ such that on $\Gamma \setminus S$,
$\psi$ is
monotonically decreasing as a function of the distance from $S$.
\end{cor}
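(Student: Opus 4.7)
The plan is to combine Proposition~\ref{p:growingpaths} with the structural fact that, once finitely many cycles are removed, what remains is a forest. First enlarge $S$ to a compact set $S' \supseteq S$ that also contains every cycle of $\Gamma$; this is possible because the number of cycles is finite and the edge-length bound $\ell_{\rm min}$ keeps the edge content of $S'$ bounded. Then $\Gamma \setminus S'$ is a disjoint union of tree components, only finitely many in number (since $S'$ meets only finitely many edges).

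For the finiteness of connected components of $Z(\psi)$, suppose $x_0 \in \Gamma \setminus S'$ were a boundary point of $Z(\psi)$. Proposition~\ref{p:growingpaths} would produce two oriented paths from $x_0$ along each of which $|\psi|$ strictly increases until the path enters $S$. But the two paths leave $x_0$ on distinct edges (or, if $x_0$ is interior to an edge, in opposite directions), so at least one of them initially heads away from $S'$ within the tree structure of its component. The chain-continuation rule --- Kirchhoff plus $\psi^{\prime\prime}$ having the sign of $\psi$ --- never reuses the edge just traversed, and on a tree this forces the outward path to keep going outward; it can neither reach $S \subseteq S'$ nor terminate at a leaf (where Kirchhoff would force $\psi^\prime = 0$, contradicting strict increase of $|\psi|$). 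Hence the chain is infinite, contradicting $\psi \in L^2(\Gamma)$. Thus $Z(\psi) \cap (\Gamma \setminus S')$ has no boundary points, is clopen in $\Gamma \setminus S'$, and is a finite union of whole tree components. Inside the compact $S'$, which intersects only finitely many edges, the ODE $\psi^{\prime\prime} = (V-E)\psi$ on each edge has zeros that are either isolated or fill the whole edge; combining, $Z(\psi) \cap S'$ has finitely many components. Summing gives the first claim.

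For monotonicity, let $C$ be any tree component of $\Gamma \setminus S'$ on which $\psi \not\equiv 0$. The first claim forces $\psi$ to have no zeros on $C$, so $\psi$ has constant sign $\sigma \in \{\pm 1\}$ there. Orienting $C$ outward from $S'$, the ODE yields $\sigma \psi^{\prime\prime} = (V-E)\sigma\psi > 0$, so $\sigma \psi^\prime$ is strictly increasing on each edge. If $\sigma \psi^\prime(x_0^+) \ge 0$ at any outward-oriented $x_0 \in C$, then just past $x_0$ we would have $\sigma \psi > 0$ and $\sigma \psi^\prime > 0$, and the same outward-chain construction produces a strictly outward chain in the tree along which $\sigma \psi$ grows without bound, contradicting $\psi \in L^2$. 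Therefore $\sigma \psi^\prime < 0$ in the outward direction throughout $C$, and continuity of $\psi$ at the vertices makes $|\psi|$ strictly monotonically decreasing as a function of distance from $S'$ on every nontrivial component (on any component where $\psi \equiv 0$ the statement is trivial). Setting $S := S'$ completes the proof.

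The main obstacle is executing the outward-chain argument cleanly on a tree: one must rule out termination in $S$ as well as termination at a leaf, so that only the $L^2$-violating infinite case remains. The finite-cycles hypothesis is exactly the structural input that makes this trichotomy decisive, since without it an outward chain could loop back through a cycle and reach $S$, and the contradiction would fail.
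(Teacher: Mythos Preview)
Your argument follows the same strategy as the paper's---invoke Proposition~\ref{p:growingpaths} and exploit the tree structure forced by the finite-cycle hypothesis---but you supply far more detail than the paper's one-sentence sketch, which merely asserts that ``at most a finite number of paths can enter $S$.'' The overall architecture is sound.

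There is, however, a gap in the step ``so at least one of them initially heads away from $S'$.'' This tacitly assumes that each tree component of $\Gamma\setminus S'$ meets $\overline{S'}$ at a single point. That need not follow from ``$S'$ contains every cycle'': if $S'$ is disconnected (e.g.\ two disjoint cycles lying far apart in $\Gamma$), a tree component $T$ can bridge two components of $S'$ and attach at two or more boundary points without creating any cycle outside $S'$. For an $x_0$ lying on the unique tree path in $T$ between two such attachment points, \emph{both} directions from $x_0$ lead toward $S'$, so both increasing paths produced by Proposition~\ref{p:growingpaths} may exit $T$ into $S'$, and your contradiction does not fire. The same issue recurs in the monotonicity paragraph, where ``outward from $S'$'' is globally well-defined only on components with a single attachment point.

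The repair is easy and preserves your argument verbatim: after forming $S'$, enlarge it once more by adjoining, for each tree component $T$ of $\Gamma\setminus S'$, the convex hull in $T$ of its (finitely many) attachment points. Each such hull is a finite subtree, hence compact, so the enlarged set remains compact; and now every component of the complement is a tree hanging from a single point of the new $S'$. With this adjustment both halves of your proof go through as written.
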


The proof is a straightforward consequence of the observation that at most a finite number of paths can enter $S$ and thus only a finite number of paths as described in Proposition
\ref{p:growingpaths} are possible given that the number of cycles is bounded.
We shall henceforth refer to such functions $\psi$ as
\textbf{monotonic exterior eigenfunctions}.

It is not excluded that $\psi$ may vanish identically on
certain maximal connected infinite subgraphs, but there can be at most
finitely many such subgraphs $\Gamma_Z$, and
any such $\Gamma_Z$
connects to $\Gamma \setminus \Gamma_Z$ at only a finite set of
vertices. Thus, for graphs with $V-E > 0$ and
treelike structures
outside a compact set,
there is no loss of generality in assuming that the exterior eigenfunctions are positive,
decreasing, and convex.

A further corollary of Proposition \ref{p:growingpaths} is a partitioned uniqueness theorem for
$L^2$ solutions of $H \psi = E \psi$ on trees.

\begin{cor}\label{c:partuniq}
{
Assume that a graph $\Gamma$ contains only a finite number of cycles and that
$V-E \ge 0$ outside a compact set.  Then there is a (possibly different) compact set
$S$ such that $\Gamma \setminus S$ can be partitioned into
a finite number of maximal connected
subgraphs $\{\mathcal{T}_k\}$, intersecting only at vertices, such that any
exterior eigenfunction $\psi$
is a linear combination of functions supported on exactly one of the
$\{\mathcal{T}_k\}$.  Moreover,
the solution set supported in each
$\{\mathcal{T}_k\}$
is one-dimensional.
}
\end{cor}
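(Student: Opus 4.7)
The plan is to combine Corollary \ref{c:decrease} with the convexity argument from Proposition \ref{p:growingpaths} to obtain a Weyl--Titchmarsh-style uniqueness result on each tree in the exterior. The first step is to choose $S$ to be a compact set so that three properties hold simultaneously: $V - E > 0$ on $\Gamma \setminus S$ (using \eqref{assum1}); $\Gamma \setminus S$ contains no cycles, which is possible because $\Gamma$ has only finitely many cycles; and every exterior eigenfunction is monotonically decreasing on each connected component of $\Gamma \setminus S$, invoking Corollary \ref{c:decrease}. Since $S$ has finite total edge length, only finitely many edges leave $S$, so $\Gamma \setminus S$ is a finite disjoint union of maximal connected subgraphs $\{\mathcal{T}_k\}$, each a tree attached to $S$ at a small set of boundary vertices.

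The core step is to show that the space of $L^2$ solutions on each $\mathcal{T}_k$ satisfying the Kirchhoff conditions at interior vertices is at most one-dimensional. Given two such solutions $\psi_1, \psi_2$ taking identical values at the boundary vertices, I would set $\phi := \psi_1 - \psi_2$ and argue $\phi \equiv 0$. Supposing not, there is a boundary vertex $v_k$ of $\mathcal{T}_k$ with $\phi(v_k) = 0$ and an edge $e$ emanating from $v_k$ on which $\phi$ is not identically zero; by ODE uniqueness $\phi'_e(v_k^+) \neq 0$, and we may assume it is positive. Because $V - E > 0$ forces $\phi'' > 0$ wherever $\phi > 0$, both $\phi$ and $\phi'$ increase along $e$ to the next vertex, and the Kirchhoff condition guarantees at least one outgoing edge carrying a positive derivative. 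Iterating as in the proof of Proposition \ref{p:growingpaths}, $\phi$ remains positive and increasing along an infinite path into $\mathcal{T}_k$ with uniformly bounded-below edge lengths, which contradicts $\phi \in L^2(\mathcal{T}_k)$.

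The partitioned decomposition then follows immediately: any exterior eigenfunction $\psi$ restricts on each $\mathcal{T}_k$ to an $L^2$ solution, and that restriction is determined by the boundary values of $\psi$; extending each restriction by zero on the other trees yields $\psi = \sum_k c_k \chi_k$, where $\chi_k$ is any fixed nonzero $L^2$ solution supported on $\mathcal{T}_k$. Existence of such a $\chi_k$ can be obtained by a Weyl--Titchmarsh limit-point construction on the rooted tree, approximating via finite truncations where $V - E > 0$ yields a decaying branch solution. The main obstacle is the application of the Proposition \ref{p:growingpaths}-type argument at $v_k$, which lies in $S$ and hence fails the hypothesis $x_0 \notin S$ of that proposition as stated; the workaround is to replay the proof rather than quote the statement, using that all forward edges from $v_k$ already lie in $\mathcal{T}_k \subset \Gamma \setminus S$ so that the convexity $\phi'' > 0$ and the Kirchhoff bookkeeping continue to apply and propagate the growing path to infinity.
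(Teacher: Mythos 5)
Your overall strategy is essentially the paper's: decompose the exterior into finitely many tree components and rule out two independent $L^2$ solutions by propagating a growing path through convexity ($\phi''>0$ where $\phi>0$) and the Kirchhoff condition. The paper does this by forming a linear combination that vanishes somewhere and citing Proposition \ref{p:growingpaths}; you do it by taking the difference of two solutions with equal boundary data and replaying the proof of that proposition at the attachment vertex, which is a legitimate and in fact necessary precaution since the vertex lies in $S$.

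There is, however, a genuine gap in the dimension count. What your core argument actually proves is that an $L^2$ Kirchhoff solution on $\mathcal{T}_k$ is determined by its values at the attachment vertices of $\mathcal{T}_k$, i.e. the solution space has dimension at most the number of attachment vertices. That is weaker than one-dimensionality whenever $\mathcal{T}_k$ meets $S$ at two or more vertices, and your conditions on $S$ do not exclude this: demanding only that $\Gamma\setminus S$ contain no cycles still permits a component that leaves $S$ at $v_1$, runs outside $S$ to a second attachment vertex $v_2$, and carries an infinite ray at an interior point $m$ --- the resulting cycle passes through $S$ and so is not a cycle \emph{of} $\Gamma\setminus S$. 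For such a component the solution space genuinely is two-dimensional (with $V-E\equiv 1$: five free constants on the three edges versus two continuity conditions and one Kirchhoff condition at $m$, and no conditions at $v_1,v_2$), so the assertion ``at most one-dimensional'' fails as stated, and with it the final representation $\psi=\sum_k c_k\chi_k$. The missing step is to enlarge $S$ once more so that it is connected and contains every cycle of $\Gamma$; then two distinct attachment vertices of a single component would yield a path in $\overline{\mathcal{T}_k}$ and a path in $S$ with the same endpoints, hence a cycle with an edge outside $S$, a contradiction. Thus each $\mathcal{T}_k$ meets $S$ at exactly one vertex and your injectivity argument does give dimension at most one; this is where the finitely-many-cycles hypothesis is really consumed. (The paper's own two-line proof leaves this same point implicit, but your formulation in terms of boundary data makes the reliance on single attachment explicit and therefore unavoidable.)
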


\begin{proof}

Suppose that there were two linearly independent $L^2$ solutions of the eigenvalue equation,
$\psi_{1,2}$, and that their supports contain an interval in common.  Since they are linearly independent, some linear combination
$$
\psi_3 = a \psi_1 - b \psi_2
$$
must change sign on $\mathcal{T}$.  However, this contradicts
Proposition \ref{p:growingpaths}, by which no solution that changes signs on $\mathcal{T}$
can belong to $L^2$.

\end{proof}

\section{Agmon estimates for quantum graphs}\label{s:Agmon}

In this section we prove decay estimates for {exterior eigenfunctions
on quantum graphs}.
{In some regards we follow the line of reasoning
laid out in the book by Hislop and Sigal \cite{HiSi},
which contains a treatment of the Agmon method in the standard case.
However, we not only adapt their argument to graphs, but
generalize it in some ways, in particular by providing
Sobolev estimates in addition to $L^2$ estimates.
The adaptation of Agmon's method to graphs
begins with analogues of two simple
integration-by-parts lemmas}
from \cite{HiSi}.
As a matter of convenience, we state
our results in the case of real functions defined on
a quantum graph.  The extension to complex solutions is
immediate, since the real and imaginary parts of a complex eigensolution are real eigensolutions.

The first lemma, replacing \cite{HiSi}, Lemma 3.6,
is an elementary identity:

\noindent
\begin{lemma}\label{HS3.6altOLD}
Suppose that $\phi$ and $\F > 0$ are real-valued functions on the metric graph $\Gamma$
such that
$\phi \ \in AC^1$ and
$\F \in AC$.
Then for any $x$ in an edge of $\Gamma$,
\begin{equation}\label{1/FLemma}
(\F \phi)^\prime \left(\frac{\phi}{\F}\right)^\prime = (\phi^\prime)^2 - \left(\frac{F^\prime}{F}\right)^2 \phi^2.
\end{equation}
Moreover, on any subgraph $\Gamma_0 \subset \Gamma$
\begin{align}\label{1/FLemma2}
\sum_{e \in \Gamma_0}\int_{e}{\F \phi \left(-\frac{d^2}{dx^2}+ V(x)-E\right)\frac{1}{\F}\phi\,dx}
= -&\sum_{v \in \Gamma_0}\sum_{e \in \Gamma_0, e \sim v} \F\phi \frac{d}{dx}\left[\frac{1}{\F}\phi \right](v^+)\nonumber\\
&+ {\sum_{e \in \Gamma_0}}
\int_{e}{|\phi^\prime|^2 + \left(V-E - \left|\frac{\F^\prime}{\F} \right|^2\right)|\phi|^2dx}.
\end{align}
The notation in \eqref{1/FLemma2} is meant to convey that the derivatives are taken in the outward sense at the vertex.
\end{lemma}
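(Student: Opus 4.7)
The plan is to split the lemma into its two assertions and handle them in order: prove the pointwise identity \eqref{1/FLemma} by direct expansion, then derive the integral identity \eqref{1/FLemma2} by edge-by-edge integration by parts, substituting \eqref{1/FLemma} into the resulting bulk integrand.

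For \eqref{1/FLemma}, the approach is routine algebra. Using the Leibniz rule I would write
\[
(F\phi)' = F'\phi + F\phi',\qquad (\phi/F)' = \phi'/F - \phi F'/F^2,
\]
multiply the two expressions, and observe that the two cross terms involving $F'\phi\phi'/F$ cancel, leaving exactly $(\phi')^2 - (F'/F)^2\phi^2$. The hypotheses $\phi\in AC^1$ and $F\in AC$ with $F>0$ (so that $1/F\in AC$ on any edge) ensure that both derivatives exist a.e., which is all that the identity requires.

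For \eqref{1/FLemma2}, I would first decompose the integrand as
\[
F\phi\bigl(-\tfrac{d^2}{dx^2}+V-E\bigr)(\phi/F) = -F\phi\,(\phi/F)'' + (V-E)\phi^2 ,
\]
then on each edge $e\subset\Gamma_0$ integrate the first term by parts to obtain
\[
\int_e -F\phi\,(\phi/F)''\,dx = -\bigl[F\phi\,(\phi/F)'\bigr]_{\partial e} + \int_e (F\phi)'(\phi/F)'\,dx .
\]
Applying \eqref{1/FLemma} rewrites the bulk integrand as $|\phi'|^2 - (F'/F)^2\phi^2$. Summing over edges and adding back the $(V-E)\phi^2$ contribution reproduces the second line of \eqref{1/FLemma2}. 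The remaining task is to reorganize the per-edge boundary evaluations $-[F\phi(\phi/F)']_{\partial e}$ into the double sum $\sum_{v}\sum_{e\sim v}$ appearing in the statement.

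The main obstacle, and the only subtle step, is the sign bookkeeping when converting between the ordinary endpoint derivative appearing in $\int_a^b g''\,dx = g'(b)-g'(a)$ and the outward-derivative convention at vertices. Under the outward convention the derivative at a vertex $v$ along an incident edge is taken pointing away from $v$ into that edge; this agrees with the usual $+$ sign at the ``left'' endpoint and flips sign at the ``right'' endpoint of any local parametrization. Carrying out this conversion edge by edge turns the $-[\,\cdot\,]_{\partial e}$ contributions into the stated vertex sum, with the global minus sign in front of $\sum_v\sum_{e\sim v} F\phi\,\tfrac{d}{dx}[(1/F)\phi](v^+)$ emerging uniformly once the outward convention is applied consistently. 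Since no graph-theoretic identity (e.g. the Kirchhoff condition) is invoked at this stage, the boundary contributions remain present; cancellations of this sort would only occur later when specific conditions on $\phi$ or $F\phi$ are imposed.
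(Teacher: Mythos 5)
Your proposal follows the paper's own argument exactly: the pointwise identity \eqref{1/FLemma} by expanding the two products and cancelling the cross terms, then edge-by-edge integration by parts with the bulk integrand rewritten via \eqref{1/FLemma} and the per-edge boundary evaluations regrouped vertex by vertex. The one place where your write-up does not check out is precisely the sign bookkeeping you single out as the subtle step. With your stated convention --- the derivative at $v$ taken pointing \emph{away from $v$ into the edge}, i.e.\ the Kirchhoff convention of Definition \ref{Kdef} --- the per-edge contribution $-\bigl[F\phi(\phi/F)'\bigr]_a^b = F\phi(\phi/F)'(a) - F\phi(\phi/F)'(b)$ regroups at the two endpoints as $+\sum_{e\sim v} F\phi\,\tfrac{d}{dx}[\phi/F](v^+)$, with a \emph{plus} sign, not the minus sign displayed in \eqref{1/FLemma2}. (Test it on a single edge $[0,1]$ with $F\equiv 1$: the boundary term is $\phi(0)\phi'(0)-\phi(1)\phi'(1)$, which is exactly the sum over the two endpoints of $\phi$ times the derivative taken away from the vertex.) The minus sign in the statement is correct only if ``outward'' is read as the classical outward normal of the \emph{edge} at its endpoint (pointing out of the edge), which is the opposite of the Definition \ref{Kdef} convention; the paper's own phrasing is ambiguous on this point and its proof waves at it just as briefly. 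The discrepancy is harmless downstream, since every application arranges for the vertex sum to vanish via continuity and the Kirchhoff conditions, but your assertion that the global minus sign ``emerges uniformly'' under the into-the-edge convention is not right as stated and should be either corrected or pinned to the out-of-the-edge convention.
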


\begin{rem}
In \cite{HiSi} Lemma 3.6, $F$ is assumed to be bounded. By keeping track of boundary terms in this lemma we are able to eliminate the need for this assumption.
\end{rem}

\begin{proof}
The first identity \eqref{1/FLemma} is an easy calculation.
The other form follows by integration by parts, the result of which
is that
\begin{small}
\begin{align}\label{e:e}
\sum_{e \in \Gamma_0}\int_{e}\F \phi \left(-\frac{d^2}{dx^2}+ V(x)-E\right)\frac{1}{\F}\phi\, dx
&= -\sum_{e\in\Gamma_0} \F\phi \frac{d}{dx}\left(\frac{1}{\F}\phi \right)\bigg|_{ei}^{ef}
\nonumber\\
&+ \sum_{e \in \Gamma_0}\int_{e}{\left(|\phi^\prime|^2
+ \left(V-E - \left|\frac{\F^\prime}{\F} \right|^2\right)|\phi|^2\right)dx},
\end{align}
\end{small}

\noindent
where $ei$ is the initial vertex of the edge $e$ and $ef$ the final vertex.  The dependence on the
edge orientation in this expression is only apparent, however:  At each vertex, all derivatives
in the integrated term are summed with an inward orientation.  Thus when the integrated terms
are collected at each vertex, the result is the expression
\eqref{1/FLemma2} which does not depend on how the edges are oriented.
\end{proof}

Our second lemma replaces
\cite{HiSi} Lemma 3.7.
\begin{lemma}\label{HS3.7alt}
Let $S_1 \subset S_2$ be two compact subsets of a subgraph $\Gamma_0$, such that
$S_2 \setminus S_1$ contains no vertices.
Let $\eta \ge 0$ be a smooth function supported in
$\Gamma_0 \setminus S_1$ such that $\eta(x) = 1$ on
$\Gamma_0 \setminus S_2$. Furthermore, let
{$\psi > 0$ satisfy $\psi'' {\geq} (V(x, \psi) - E)\psi$ on each edge.}
Then for each $x \in \Gamma_0 \setminus S_1$,
\begin{equation}\label{ptwiseSobbd}
\F^2 \eta \psi \left(-\frac{d^2}{dx^2}+ V(x)-E\right)\eta \psi
 \le C_0 \chi_{\supp \eta}(x)  (\psi)^2(x)+ G^\prime(x),
\end{equation}
where $C_0$ is a finite constant and $G := {-} \frac{1}{2} ((F\psi)^2(\eta^2)')$.
\end{lemma}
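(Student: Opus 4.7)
The plan is to derive the pointwise inequality by expanding the product rule, applying the subsolution hypothesis $\psi'' \ge (V-E)\psi$ edge by edge, and then observing that the term $G^{\prime}$ is exactly what is needed to absorb the cross term coming from $(\eta\psi)^{\prime\prime}$. Concretely, first I would write
\begin{equation*}
\bigl(-\partial_{xx} + V-E\bigr)(\eta\psi)
= -\eta^{\prime\prime}\psi - 2\eta^{\prime}\psi^{\prime} - \eta\psi^{\prime\prime} + (V-E)\eta\psi
= -\eta^{\prime\prime}\psi - 2\eta^{\prime}\psi^{\prime} + \eta\bigl((V-E)\psi-\psi^{\prime\prime}\bigr).
\end{equation*}
Since $\eta\ge 0$ and $\psi^{\prime\prime}\ge (V-E)\psi$ on every edge, the last bracket is nonpositive, so multiplying by the nonnegative factor $F^{2}\eta\psi$ yields
\begin{equation*}
F^{2}\eta\psi\bigl(-\partial_{xx}+V-E\bigr)(\eta\psi) \;\le\; -F^{2}\eta\eta^{\prime\prime}\psi^{2} - 2F^{2}\eta\eta^{\prime}\psi\psi^{\prime}.
\end{equation*}
All $\psi^{\prime\prime}$ and $V-E$ information has now been discarded; what remains is a pure derivative identity to be matched against $G^{\prime}$.

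Next I would compute $G^{\prime}$ directly from $G = -\tfrac12(F\psi)^{2}(\eta^{2})^{\prime} = -F^{2}\psi^{2}\eta\eta^{\prime}$, obtaining
\begin{equation*}
G^{\prime} = -2FF^{\prime}\psi^{2}\eta\eta^{\prime} - 2F^{2}\psi\psi^{\prime}\eta\eta^{\prime} - F^{2}\psi^{2}(\eta^{\prime})^{2} - F^{2}\psi^{2}\eta\eta^{\prime\prime}.
\end{equation*}
Subtracting, the two terms that involve the \emph{unknown} behaviour of $\psi$ and $\psi^{\prime}$ on the large set $\Gamma_{0}\setminus S_{2}$, namely $-F^{2}\eta\eta^{\prime\prime}\psi^{2}$ and $-2F^{2}\eta\eta^{\prime}\psi\psi^{\prime}$, cancel exactly against the corresponding pieces of $G^{\prime}$, leaving
\begin{equation*}
F^{2}\eta\psi\bigl(-\partial_{xx}+V-E\bigr)(\eta\psi) - G^{\prime} \;\le\; \psi^{2}\bigl(2FF^{\prime}\eta\eta^{\prime} + F^{2}(\eta^{\prime})^{2}\bigr).
\end{equation*}

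It remains to bound the bracket by a constant times $\chi_{\supp\eta}$. The crucial point is that both $\eta\eta^{\prime}$ and $(\eta^{\prime})^{2}$ vanish off $\supp\eta^{\prime}\subset S_{2}\setminus S_{1}$, which by hypothesis is a \emph{compact} subset of $\Gamma_{0}$ containing \emph{no vertices}. Hence $F$ and $F^{\prime}$, being in $AC^{1}$ on each edge, are continuous on this compact set, so $2FF^{\prime}\eta\eta^{\prime}+F^{2}(\eta^{\prime})^{2}$ is bounded by a finite constant $C_{0}$ on its support. Since this support is contained in $\supp\eta$, the asserted pointwise bound follows.

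The only real bookkeeping difficulty is keeping the signs straight in the derivative of $G$ so that precisely the right two terms cancel; the rest is algebraic. I expect the main conceptual obstacle, in using this lemma subsequently, is that the $C_{0}$ obtained here depends on the size of $F$ on the compact annulus $S_{2}\setminus S_{1}$, which is why $F\psi$ is controlled in Sobolev norm there by the $H^{1}(S_{2})$ term of Proposition \ref{t:Sobbd} rather than via the Agmon growth factor.
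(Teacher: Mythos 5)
Your proof is correct and follows essentially the same route as the paper: expand $(\eta\psi)''$, use $\psi''\ge(V-E)\psi$ with $\eta,\psi, F^2\ge 0$ to discard the $\eta\psi''$ and $(V-E)$ terms, and then identify the remainder as $G'$ plus a term supported in $\supp\eta'$. The only (cosmetic) difference is that you compute $G'$ directly and cancel, whereas the paper adds and subtracts the total derivative $\tfrac12\bigl((F\psi)^2(\eta^2)'\bigr)'$; both yield the same residual coefficient $2FF'\eta\eta' + F^2(\eta')^2 = \tfrac12\bigl(F^2(\eta^2)'\bigr)' - F^2\eta\eta''$, bounded on the compact, vertex-free set $\supp\eta'$.
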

\begin{rem}
The left side of \eqref{ptwiseSobbd} is the integrand in \eqref{1/FLemma} after setting
$\phi = F \eta \psi$, as we shall do in the proof of Proposition \ref{t:Sobbd}.  Furthermore, we note that $G$ as above is 0 on all vertices since $\eta$ is chosen in such a way that $\supp \eta'$ contains no vertices.
\end{rem}

\begin{proof}
Expanding the derivatives and using that $\psi$ is a subsolution, we get that
\begin{align}
\F^2 \eta \psi \left(-\frac{d^2}{dx^2}+ V(x)-E\right)\eta \psi&= -\F^2 \eta \psi (\eta'' \psi + 2 \eta' \psi' + \eta \psi'')
 + (V - E) (\F \eta \psi)^2(x)\nonumber\\
&
\le
 -\F^2 \eta \psi (\eta'' \psi + 2 \eta' \psi' ) \nonumber\\
&
\quad\quad =
-(F\psi)^2\eta \eta'' - \frac{1}{2} F^2(\eta^2)' (\psi^2)'\nonumber\\
&
\quad\quad =
-(F\psi)^2\eta \eta'' - \frac{1}{2} ((F\psi)^2(\eta^2)')' + \frac{1}{2}(F^2(\eta^2)')'\psi^2 \nonumber\\
&
\quad\quad =
\psi^2\left(\frac{1}{2}(F^2(\eta^2)')' - F^2\eta \eta''\right) - \frac{1}{2} ((F\psi)^2(\eta^2)')' .\nonumber
\end{align}
Since the first term is supported within $\supp \eta^\prime$ and $\supp \eta'$ contains no vertices, it is dominated by
$C_0 \chi_{\supp \eta}(x) \psi^2(x)$ as claimed, establishing
\eqref{ptwiseSobbd}.

\end{proof}

With these two lemmas in hand we are ready to prove our theorems, following the
philosophy of Agmon.

\begin{proof}[Proof of Proposition \ref{t:Sobbd}]
We let $\eta$ be a smoothed characteristic function such that $\eta = 0$ on $S_1$ and 1 outside
$S_2$, and set $\phi = F \eta \psi$.
{Using \eqref{1/FLemma2} and \eqref{e:constraint} we get}
\begin{small}
\begin{equation}\label{e:e2}\begin{split}
\sum_{e \in \Gamma_0}\int_{e}\F^2 \eta \psi \left(-\frac{d^2}{dx^2}+ V(x)-E\right)&\eta \psi \, dx
\geq \sum_{e \in \Gamma_0}\int_{e}\left((F \eta \psi)'\right)^2 + \delta (F \eta \psi)^2 dx
\\
&
= \sum_{e \in \Gamma_0}\int_{e}\eta^2 \left[\left((F \psi )'\right)^2 + \delta (F  \psi)^2\right] dx + \sum_{e \in \Gamma_0}\int_{e} (\eta')^2 (F\psi)^2 + 2 \eta \eta'((F\psi)^2)'
\\
&
= \sum_{e \in \Gamma_0}\int_{e}\eta^2 \left[\left((F \psi )'\right)^2 + \delta (F  \psi)^2\right] dx + \sum_{e \in \Gamma_0}\int_{e} (\eta')^2 (F\psi)^2 - (\eta \eta')'(F\psi)^2
\\
&
=
\sum_{e \in \Gamma_0}\int_{e}\eta^2 \left[\left((F \psi )'\right)^2 + \delta (F  \psi)^2\right] dx - \sum_{e \in \Gamma_0}\int_{e} \eta \eta'' (\F\psi)^2.
\end{split}\end{equation}
\end{small}
To establish an upper bound we integrate \eqref{ptwiseSobbd} over $\Gamma_0$ to get
\begin{small}
\begin{equation}\label{e:was18}
\sum_{e \in \Gamma_0}\int_{e}\eta^2 \left[\left((F \psi )'\right)^2 + \delta (F  \psi)^2\right] \,dx \leq  \sum_{e \in \Gamma_0}\int_{e} \eta \eta''(\F\psi)^2 \,dx + C_0 \chi_{\supp \eta}(x)  (\psi)^2(x)+ G^\prime(x).
\end{equation}
\end{small}
Thus,
\begin{equation}\nonumber
\sum_{e \in \Gamma_0}\int_{e}\eta^2 \left[\left((\F \psi )'\right)^2 + (\F  \psi)^2\right] \,dx \leq C_2 \|\psi\|^2_{L^2(\supp(\eta'))},
\end{equation}
where in the last line we used that $G$ is 0 on all vertices to handle the
{last term
in \eqref{e:was18}, and $C_2$ is a constant large enough to incorporate $C_0$, the
finite maximum
value of $\eta\eta''\F^2$ on the compact set $\supp(\eta^\prime)$, and
the effect of making the coefficients $1$ and $\delta$ uniform on the left side.}

To complete the proof of Proposition \ref{t:Sobbd}
on both sides we add
$$\sum_{e \in \Gamma_0}\int_{e}(1-\eta^2) \left[\left((F \psi )'\right)^2 +  (F  \psi)^2\right] \,dx,$$
which is dominated
by a constant times $\|\psi\|_{L^2(S_2)}^2 + \|\psi'\|_{L^2(S_2)}^2$, because the support of $1-\eta^2$
is contained in the compact set $S_2$.
\end{proof}

\begin{proof}[Proof of Theorem \ref{t:1D}, Step 1]
In Step 1 we establish the finite Sobolev norm of $F(x, E-\delta)\psi(x)$
with $\F(x,E) := e^{\rho_a(x,E)}$.
Since by assumption $\psi$ is an exterior eigenfunction
and $\F(x,E- \delta)$ satisfies
assumption \eqref{e:constraint} of Proposition \ref{t:Sobbd},
\eqref{e:integSobbd} follows.
Since $\F$ is continuous and $\psi$ satisfies Kirchhoff conditions
at the vertices,
the vertex contributions to \eqref{e:integSobbd} vanish, establishing that
$e^{\rho_a(x;E-\delta)} \psi \in H^1(\Gamma_0)$.

\end{proof}

Step 2 is to show that the $H^1$ bound that has been established above
implies a pointwise bound on $F \psi$.  This is immediate from the following lemma,
choosing $\phi = \F \psi$.

\begin{lemma}\label{l:ptw}
{Suppose that $\phi \in AC^1$ on the edges of $\Gamma_0$,
and that
$$
\|\phi\|_{H^1}^2 \le \infty.
$$
Then $\phi \in L^\infty(\Gamma_0)$.}
\end{lemma}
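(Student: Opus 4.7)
The plan is to reduce to the classical one-dimensional Sobolev embedding on a single interval, edge by edge, exploiting the standing assumption that $\Gamma$ has a uniform lower bound $\ell_{\min}$ on the lengths of its edges. The key observation is that $\|\phi\|_{H^1(\Gamma_0)}^2 = \sum_{e\subset\Gamma_0}\|\phi\|_{H^1(e)}^2$, so finiteness of the total Sobolev norm implies that each individual edge contribution is finite (and uniformly bounded by the total). Crucially, since $\phi\in AC^1$ on each edge, we need no information about continuity at the vertices in order to argue edge-by-edge; this is important because, as noted in the remark following Proposition \ref{t:Sobbd}, the function $F\psi$ to which the lemma will be applied is not a priori in $H^1(\Gamma)$ in the graph sense.

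First I would recall the standard one-dimensional fact: if $I$ is an interval of length $\ell \ge \ell_{\min}$ and $u\in H^1(I)$, then
\begin{equation*}
u(x)^2 = u(y)^2 + \int_y^x 2 u(t) u'(t)\,dt \le u(y)^2 + \|u\|_{H^1(I)}^2
\end{equation*}
for any $x,y\in I$. Averaging over $y\in I$ gives the pointwise estimate
\begin{equation*}
\|u\|_{L^\infty(I)}^2 \le \left(\frac{1}{\ell_{\min}} + 1\right) \|u\|_{H^1(I)}^2,
\end{equation*}
with a constant depending only on the global lower length bound $\ell_{\min}$, and not on the particular edge.

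Applying this estimate to $\phi$ restricted to each edge $e$ of $\Gamma_0$ yields
\begin{equation*}
\|\phi\|_{L^\infty(e)}^2 \le \left(\frac{1}{\ell_{\min}}+1\right)\|\phi\|_{H^1(e)}^2 \le \left(\frac{1}{\ell_{\min}}+1\right)\|\phi\|_{H^1(\Gamma_0)}^2,
\end{equation*}
where the right-hand side is independent of $e$. Taking the supremum over all edges $e\subset\Gamma_0$ gives $\|\phi\|_{L^\infty(\Gamma_0)}<\infty$, which is the conclusion of the lemma. There is no real obstacle here; the only point worth emphasizing is that the uniform lower bound $\ell_{\min}$ on edge lengths is precisely what allows the one-dimensional Sobolev constants on individual edges to be bounded uniformly, preventing any degeneration as one ranges over the (possibly infinitely many) edges of $\Gamma_0$.
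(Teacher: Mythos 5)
Your proof is correct and follows essentially the same route as the paper's: both reduce the claim to the one-dimensional Sobolev embedding on a single edge, with the uniform lower bound $\ell_{\min}$ on edge lengths guaranteeing a constant independent of the edge, and both exploit that no vertex continuity is needed since the argument is purely edge-local. The only cosmetic difference is that you obtain the local estimate by averaging the identity $u(x)^2=u(y)^2+\int_y^x 2uu'\,dt$ over $y$, whereas the paper multiplies by a cutoff function $\chi$ of uniformly bounded $C^1$ norm and integrates $(\chi\phi)'$ over an interval of length $\ell_{\min}/2$.
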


\begin{proof}
Although $\phi$ is not assumed continuous at the vertices,
being in $H^1$ on the edges
implies that $\phi$ has well-defined finite limits
as $x$ tends to a vertex along any given edge.

We now fix $x_0 \in \Gamma_0$, and choose a function $\chi$
supported in $\{x: \rm{dist}(x, x_0) \le \frac{\ell_{\rm min}}{2}\}$, $\chi \in C^1$ on all edges intersecting this set,
and continuous and equal to $1$ at $x_0$.  (The circumlocution is only necessary in case $x_0$ is a vertex.  Here if
$x_0$ is a vertex, we interpret
$\phi(x_0)$ as the limiting value along any given edge.)  The function $\chi$ is to be
chosen so that its $C^1$ norm does not depend on $x_0$.

Because of the assumption that there is a minimum
edge length, we can write $\chi \phi(x)$ as the integral of its derivative over an interval,
which we may assume without loss of generality, by choosing an orientation for $x$,
to be of the form $I = \left(x_0 - \frac{\ell_{\rm min}}{2}, x_0\right)$, finding
\begin{align}
|\phi(x_0)| &=  |\chi(x_0) \phi(x_0)| = \left|\int_{x_0 - \frac{\ell_{\rm min}}{2}}^{x_0}{(\chi(y) \phi(y))^\prime dy}\right|\nonumber\\
&= \left|\int_{x_0 - \frac{\ell_{\rm min}}{2}}^{x_0}{\left(\chi^\prime (y) \phi(y) + \chi(y)\phi^\prime(y)\right)  dy}\right|\nonumber\\
&\le \frac 1 2 \int_{x_0 - \frac{\ell_{\rm min}}{2}}^{x_0}{\left((\chi^\prime)^2(y)+ (\phi(y))^2 + (\chi(y))^2 + (\phi^\prime(y))^2\right)  dy},\nonumber\\
\end{align}
which is bounded independently of $x_0$ by the assumptions of the lemma.
\end{proof}

\begin{proof}[Proof of Theorem \ref{t:pbound}]
We again apply Proposition \ref{t:Sobbd}.  The Agmon multiplier
$e^{\rho_P(x,E-\delta)}$ has been chosen so that the boundary terms in
\eqref{e:integSobbd} vanish when $\Gamma_0$ is identified with $P$.  This establishes that
$e^{\rho_P(x,E-\delta)}\psi$ has finite $H^1$ norm and therefore finite $L^2$ norm.  (It fails,
however, to have the continuity necessary to belong to the space $H^1(P)$.

The $L^\infty$ bound follows as before by an application of Lemma \ref{l:ptw}.

\end{proof}

We now turn to the proof of Theorem \ref{t:ave}, showing that
in the case of regular braided graphs, when the number of vertices in
generation $j$ increases without bound, the
exterior eigenfunctions as defined in Definition \ref{d:ave}
decrease on average more rapidly than
the one-dimensional upper bound of Theorem \ref{t:1D}.
As a consequence of Proposition \ref{p:growingpaths} we may assume that each $\psi_e(x) > 0$, and consequently that the averaged wave function $\Psi(y) > 0$.

\begin{prop}\label{aveprops}
Let $\psi$ be an exterior eigenfunction on a regular braided graph and suppose $V(x)\geq V_m(y)$ where $\liminf(V_m(y)-E)> 0$. Then the averaged wave function $\Psi(y)$, as defined in \eqref{e:Psi}, enjoys the following properties:
\begin{enumerate}
\item  $\Psi$ is continuous and decreasing in magnitude.
% outside $S$.
\item  Except at the positions of the vertices $y=v_j$, $\Psi$ satisfies

$$\Psi^{\prime\prime} \ge \left(V_m(y) - E\right) \Psi \ge 0.$$

\item  The derivative $\Psi^\prime$ is discontinuous at $y=v_j$, decreasing in magnitude by a
factor $p_{j} = \frac{a_j}{b_j}$.
\end{enumerate}
\end{prop}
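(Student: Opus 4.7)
The plan is to work piecewise between consecutive vertex positions, exploiting regularity so that for $y$ in each open interval $(v_j, v_{j+1})$, every point $x$ at distance $y$ from the root lies on an edge of the same generation, and the weight $\prod_{j': v_{j'} < x}(a_{j'}/b_{j'})$ takes a constant value $c_j$ there. The averaged wave function then collapses to
\[
\Psi(y) = c_j \sum_{e} \psi_e(x_e(y))
\]
on each such interval, where the sum runs over the edges at generation $j$ and $x_e(y)$ is the unique point on $e$ at distance $y$. The crucial recursion $c_j = c_{j-1}(a_j/b_j)$, read off directly from the definition of $\Psi$, will match the two sides up at every vertex position. Each of the three claims then reduces to an edge-by-edge calculation followed by a regrouping of terms at the generation-$j$ vertices.

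For claim (2) I would differentiate this formula twice on $(v_j, v_{j+1})$, substitute $\psi_e'' = (V(x) - E)\psi_e$ from the eigenvalue equation, invoke $V(x) \geq V_m(y)$ together with the positivity $\psi_e \geq 0$ guaranteed by Corollary \ref{c:decrease}, and sum the resulting inequalities to obtain $\Psi'' \geq (V_m(y) - E)\Psi$. For the continuity part of claim (1), I would compute the one-sided limits $\Psi(v_j^{\pm})$ by grouping edges according to which generation-$j$ vertex $v$ they touch: on the left, the $a_j$ edges incoming to $v$ all share the common value $\psi(v)$ at $v$ by the continuity component of Kirchhoff, and on the right the $b_j$ outgoing edges are likewise each equal to $\psi(v)$. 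This produces $\Psi(v_j^-) = c_{j-1}\, a_j \sum_v \psi(v)$ and $\Psi(v_j^+) = c_j\, b_j \sum_v \psi(v)$, and the recursion for $c_j$ makes them coincide. Monotone decrease on each open interval then follows by differentiating the formula once and invoking $\psi_e' < 0$ outside a compact set (again Corollary \ref{c:decrease}); continuity promotes this to monotonicity on all of $\R^+$.

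Claim (3) emerges from exactly the same regrouping applied to first derivatives. At each generation-$j$ vertex $v$, the Kirchhoff condition $\sum_{e \sim v} f_e'(v^+) = 0$, once the incoming edges are rewritten in the distance orientation, reads $\sum_{\text{outgoing}} \psi_e'(v^+) = \sum_{\text{incoming}} \psi_e'(v^-) =: D_v$. Hence $\Psi'(v_j^-) = c_{j-1} \sum_v D_v$ and $\Psi'(v_j^+) = c_j \sum_v D_v$, whose ratio is $c_j/c_{j-1} = a_j/b_j = p_j$, yielding the claimed jump. The main obstacle is really not analytic but organizational: one has to regroup sums by vertex cleanly and keep the Kirchhoff sign conventions straight when passing between the outward and the distance orientations. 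Once that bookkeeping is in place, each property is essentially forced by the edge ODE together with the collapsing sums permitted by regularity.
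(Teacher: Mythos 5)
Your argument is essentially the paper's proof: on each interval $(v_j,v_{j+1})$ the weights collapse to the constant $c_j=\prod_{k\le j}(a_k/b_k)$, continuity of $\Psi$ follows from regrouping the sum at the generation-$j$ vertices and using continuity of $\psi$ there, claim (2) is the edgewise eigenvalue equation summed with positive weights, and claim (3) is the same vertex-by-vertex regrouping combined with the Kirchhoff identity $\sum_{\rm outgoing}\psi_e'(v^+)=\sum_{\rm incoming}\psi_e'(v^-)$; this is exactly the bookkeeping the paper carries out with its weights $W_t$. The one point to repair is your source for $\psi_e>0$ and $\psi_e'<0$: you cite Corollary \ref{c:decrease}, but its hypothesis of finitely many cycles fails for a genuinely braided graph (any graph with $a_j\ge 2$ in infinitely many generations has infinitely many cycles), so that corollary does not apply here. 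The paper instead takes positivity of each $\psi_e$ (hence $\Psi>0$) directly from Proposition \ref{p:growingpaths}, as stated in the remark immediately preceding the proposition, and attributes the decrease of $\Psi$ to the same proposition; alternatively, once $\Psi>0$ is granted, the decrease can be recovered from your own pieces: $\Psi$ is piecewise convex by (2), its derivative jumps by the positive factor $a_j/b_j$ at each $v_j$ by (3), so if $\Psi'$ were ever positive it would remain so and $\Psi$ could not tend to $0$, contradicting the uniform decay of $\psi$ furnished by Theorem \ref{t:1D}. With that substitution your proof is complete and equivalent to the paper's.
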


\begin{proof}
1.   $\Psi$ must decrease in magnitude
%outside $S$
as a consequence of Proposition \ref{p:growingpaths}.  To see that
$\Psi$ is continuous,
first, for any $y\in\Gamma$,  define $W_t:=
\prod\limits_{{\stackrel{\rm\scriptstyle vertices}{\rm leading\ to\ }t}}
\frac{a_v}{b_v}$, and observe
that $\sum_{t:{\rm dist}(0,t)=y}W_t=1$. Since
$\Psi(y)=\sum_{t:{\rm dist}(0,t)=y}
W_t\psi(t)$,
each time a $y$ passes a value $v_j$, a
contribution of
$\left( \prod_{k<j} \frac{a_{k}}{b_{k}}\right) \sum_{\ell=1}^{a_{j}} \psi_{j-1,\ell}(v^-)$ to $\Psi(y)$ is replaced by
\[\left( \prod_{k<j}\frac{a_{k}}{b_{k}} \right) \cdot \frac{a_j}{b_j} \sum^{b_j}_{n=1} \psi_{j,n}(v^+_k) =
\left( \prod_{k<j} \frac{a_{k}}{b_{k}} \right) \sum_{\ell=1}^{a_{j}} \psi_{j-1,\ell}(v^-)\]
 by the continuity of $\psi$.

\noindent
2.  This is clear by linearity.

\noindent
3.  When $y$ passes a value $v_j$, $a_j$ summands of the form
$\psi_{e^\prime}^\prime(v_j^-)$ are replaced by  $\sum_{e > v_j}{\frac{a_j}{b_j} \psi_e^\prime(v_j^+)} = \sum_{e^\prime < v_j}{ \psi_e^\prime(v_j^-)}$,
according to the Kirchhoff condition.
\end{proof}

\begin{proof}[Proof of Theorem \ref{t:ave}]
We note that the averaged wave function can be considered as an exterior eigenfunction on a path as in Theorem \ref{t:pbound}, where
$p_{j} = \left(\frac{a_j}{b_j}\right)$.  Although it is possible that some of the $p_{j} > 1$, this does not affect the proof of the theorem, and convergence is not at issue, because of the following argument:

The number of vertices at the $j$-th generation is $\prod_{\ell \le j}\left(\frac{b_j}{a_j} \right)$, which shows that the factor
$\prod_{\ell \le j}\left(\frac{a_j}{b_j} \right) \le 1$ and implies that if the number of vertices at the $j$-th generation is bounded below by
a function of $j$ that tends monotonically to $+\infty$, the Agmon multiplier for $\Psi$ is exponentially smaller than $e^{-\rho_a(x)}$.

Thus Theorem \ref{t:ave} is a special case of Theorem \ref{t:pbound}.  Since $\Psi$ is continuous
at the vertices as observed in Proposition \ref{aveprops}, we know that
$F_{\rm ave}(y, E-\delta) \Psi \in H^1$ and not merely in $L^2$ with finite $H^1$ norm.
\end{proof}

\section{Case studies}\label{s:case}
In this section we develop several
illuminating
examples.
We begin by reviewing the case of the most regular tree.
\subsection{The regular tree with equal lengths L and branching numbers $b$}\label{s:regtree}
We consider a tree rooted at $v_0$ which starts with one edge and splits into $b$ edges at each vertex henceforth. We are able to construct an explicit
exterior eigenfunction on such a tree.

 We will work with transfer matrices. Suppose edge $e_j$ and $e_{j+1}$ are adjacent at a vertex $v$. If $\psi_1$, $\psi_2$ is a basis of the solution space and on an edge $e_j$ the solution is $A_j \psi_1 + B_j \psi_2$ and the solution on edge $e_{j+1}$ is given by $A_{j+1} \psi_1 + B_{j+1} \psi_2$. Then a \textbf{transfer matrix T} is a matrix such that
 \[(A_{j+1} \; \; B_{j+1})^t = T (A_{j} \; \; B_{j})^t.\] We will usually take $(\psi_1 (x), \psi_2(x)) = (\cosh(kx), \sinh(kx))$ on an edge with $k = \sqrt E$.

 We take the transfer matrix at each vertex to be
\begin{equation}
T = \begin{pmatrix}
\cosh kL & \sinh kL \\
\frac{1}{b} \sinh kL & \frac{1}{b}{\cosh kL}
\end{pmatrix}
\end{equation}
Since all the transfer matrices are equal by construction, it suffices for the purpose of characterizing the $L^2$ eigenfunction to find the eigenvalues of $T$. Since $\det T = 1/b$ and $\tr T = \left( 1 + \frac{1}{b}\right)\cosh kL > 2/\sqrt b$ both eigenvalues are real. Then solving for the eigenvector will give us the initial conditions that yield the decay corresponding to $\lambda_1^n$ where $\lambda_1$ is the smaller of the two eigenvalues and $n$ is the number of vertices away from the root. We want to compare how the decay on the tree compares to the decay on the line.

The eigenvalue $\lambda_1$ is given by
\begin{equation}\begin{split}
\lambda_1 &= \left( \frac{1}{2} + \frac{1}{2b}\right) \cosh kL - \sqrt{\left(\left( \frac{1}{2} + \frac{1}{2b}\right) \cosh kL \right)^2 - 1/b}
\\
&
=\frac{1}{\left(\frac{b}{2}+\frac{1}{2}\right)\cosh kL+ \sqrt{\left(\left(\frac{b}{2}+\frac{1}{2}\right)\cosh kL\right)^2 - b}}
\\
&
< \frac{1}{b\cosh kl}
\end{split}\end{equation}
where the last inequality follows from the fact that
\[
\left(\left(\frac{b}{2}+\frac{1}{2}\right)\cosh kL\right)^2 - b > \left(\left(\frac{b}{2}-\frac{1}{2}\right)\cosh kL\right)^2.
\]
This implies that the solution we have constructed is in $L^2$ for the tree since
\[
\int_\Gamma |\phi|^2 = C \sum_n b^n \lambda_1^{2n}.
\]
If $\lambda_1 < \alpha/\sqrt{b}$ for $\alpha <1$ then the above sum converges.

To compare this to the case of the line, we immediately see that the factor of $\frac{1}{b^n}$ makes the pointwise decay faster than the case of the line, where the decay is just $e^{-kx}$. On the other hand, if a solution is to be in $L^2$ for a tree the $1/\sqrt{b}$ factor is required for convergence. However, we have a factor of $1/b$ instead, which means that even if we consider partial integrals, the decay on the tree will be faster than on the line.

\subsection{The 2-lengths tree}\label{s:2lengths}
In this subsection we find an exterior eigenfunction for a certain tree, which is
{more sophisticated than the regular tree but still can be solved explicitly and exhibits more rapid
decrease of than the general result with the classical action.  The key to this and other examples is that if one approaches exponential decay through transfer matrices,
and parameters can be adjusted so that all the matrices in a product share a common eigenvector, then the the growth properties of the full solution built upon
that eigenvector will be determined by the product of the associated eigenvalues of the transfer matrices.}

\begin{defi}
Let the \textbf{2-lengths tree} be a rooted tree which at each vertex splits into two edges with lengths $L_1$ and $L_2$.
\end{defi}
When $L_1= L_2$, we recover the regular tree with branching number 2. Similar to the regular tree, the transfer matrix at a vertex $v$ assuming that the edge terminating at $v$ has length $L_j$ will be
\begin{equation}
T_j = \begin{pmatrix}
\cosh kL_j & \sinh kL_j \\
p_j \sinh kL_j & p_j{\cosh kL_j}
\end{pmatrix}
\end{equation}
We will seek weights for the derivative fraction $p_1$, $p_2$ at the vertex $v$, so that the eigenvector corresponding to the smaller eigenvalue of $T_1$ is the same as the eigenvector corresponding to the smaller eigenvalue of $T_2$.

We introduce the following notation. Let $c_j = \cosh kL_j$, $s_j = \sinh kL_j$, $\lambda$ be the smaller eigenvalue of $T_1$, $\mu$ be the smaller eigenvalue of $T_2$, and $(1, w)$ be the eigenvector common to $T_1$ and $T_2$ associated to $\lambda$ and $\mu$ respectively. Then from the eigenvalue equations we get that
\begin{align*}
\lambda &= \frac{1}{2}\left( c_1 + p_1 c_1- \sqrt{(c_1 + p_1 c_1)^2 - 4p_1}\right) \\
\mu &= \frac{1}{2}\left( c_2 + (1-p_1) c_2- \sqrt{(c_2 + (1-p_1) c_2)^2 - 4(1-p_1)}\right).
\end{align*}
From the two eigenvector equations we get that
\[
w = \frac{\lambda - c_1}{s_1} = \frac{\mu - c_2}{s_2}.
\]
 Putting these together we obtain the following equation for $p_1$:
 \begin{equation}
 \frac{p_1c_1 - c_1 - \sqrt{(c_1 + p_1 c_1)^2 - 4p_1}}{2s_1} = \frac{(1-p_1)c_2 - c_2 - \sqrt{(c_2 + (1-p_1) c_2)^2 - 4(1-p_1)}}{2s_2}.
 \end{equation}
 We will consider the two sides of the above equation separately and look for an intersection point in $(0, 1)$.
 We notice that when $p_1 = 0$ the left side is $-c_1/s_1 < -1$ and the right side is
 \[
 \frac{-\sqrt{c_2^2 - 1}}{s_2} = -1,
 \]
while symmetrically the right is $-1$ at $p_1 = 1$ and $-c_2/s_2 < -1$ for $p_1 = 0$ which implies
by the intermediate-value theorem that there is an intersection point, yielding a solution for some value of $p_1\in(0,1)$.

\subsection{A regular tree in the sense of Naimark and Solomyak}\label{s:eqgen}
Consider next a tree with equal lengths and branching numbers at each generation as in Definition \ref{d:regular}.
Let $b_n$ be the branching number at generation $n$. It is clear by the uniqueness of the {exterior eigenfunction} that at each vertex for $j$ between 0 and $b_n$, $p_j = 1/b_n$. Suppose that $\psi$ is an {exterior eigenfunction} and at some generation for some $j_0$, $p_{j_0} \neq 1/b_n$. Then there exists $j_1$ such that $p_{j_0} \neq p_{j_1}$. However the tree is self-similar under permutation of the branches, so a composition of $\psi$ with the isometry that maps $j_0$th branch to the $j_1$th branch will yield a second distinct eigenfunction, which would be a contradiction. From this we have a complete characterization of the eigenfunction, and thus by Theorem \ref{t:pbound} and Corollary \ref{c:partuniq} we obtain that for any directed path $P$ which includes $n$ vertices
\[
\psi(x) \leq Ce^{(1 - \delta)\rho_a(x)}\prod_{k = 1}^n \frac{1}{\sqrt{b_k}}  .
\]

\subsection{{The millipede}\label{millipede}}
Consider a graph consisting of the half axis $[0,\infty)$ with additional half axes attached at each even integer position.  On the main
half-axis, called the ``body,'' we posit $V(x) = 0$, whereas on each ``leg'' of the millipede emanating from position $x=k$
we posit a potential $V(x) = -1+\delta^2$ for $\delta > 0$.   The position $0$ on each leg corresponds to the vertex.
We set $E=-1$ so that the eigenfunction satisfies
$\psi^{\prime\prime} = \psi$ on the body between integer vertices.  On the legs it satisfies $\psi^{\prime\prime} = \delta^2 \psi$.  The $L^2$ solutions are thus proportional
to $e^{-\delta x}$ on the legs, and the solutions on the body are determined by a transfer matrix which, after an elementary calculation, has the form
$$
\left(
\begin{array}{cc}
\cosh 2 & \sinh 2\\
\sinh 2 + \delta \cosh 2 & \cosh 2 + \delta \sinh 2
\end{array}
\right).
$$
The smaller eigenvalue of the transfer matrix is
\begin{align}
\cosh 2 + \frac{\delta \sinh 2}{2} - \sqrt{\left(\cosh 2 + \frac{\delta \sinh 2}{2} \right)^2-1} &= e^{-2}\left(1 - \frac{\delta}{2}\right) + 0(\delta^2)\nonumber\\
&= e^{-2- \frac{\delta}{2} + 0(\delta^2)}.
\end{align}
This implies that the $L^2$ solution along the body is of the form $e^{\left(-1-\frac{\delta}{4} + 0(\delta^2)\right) x}$ times a periodic function.

\subsection{The ladder} \label{s:ladder} In closing we present an analysis of a ``ladder'' graph (see Figure \ref{f:ladder}), which fits within the analysis of the generic Theorem \ref{t:1D}, but not some of the other
results, being infinitely multiply connected. In particular, partitioned uniqueness (Corollary \ref{c:partuniq}) no longer applies. Here we construct two possible eigensolutions.

\begin{figure}
\includegraphics[scale = 0.5]{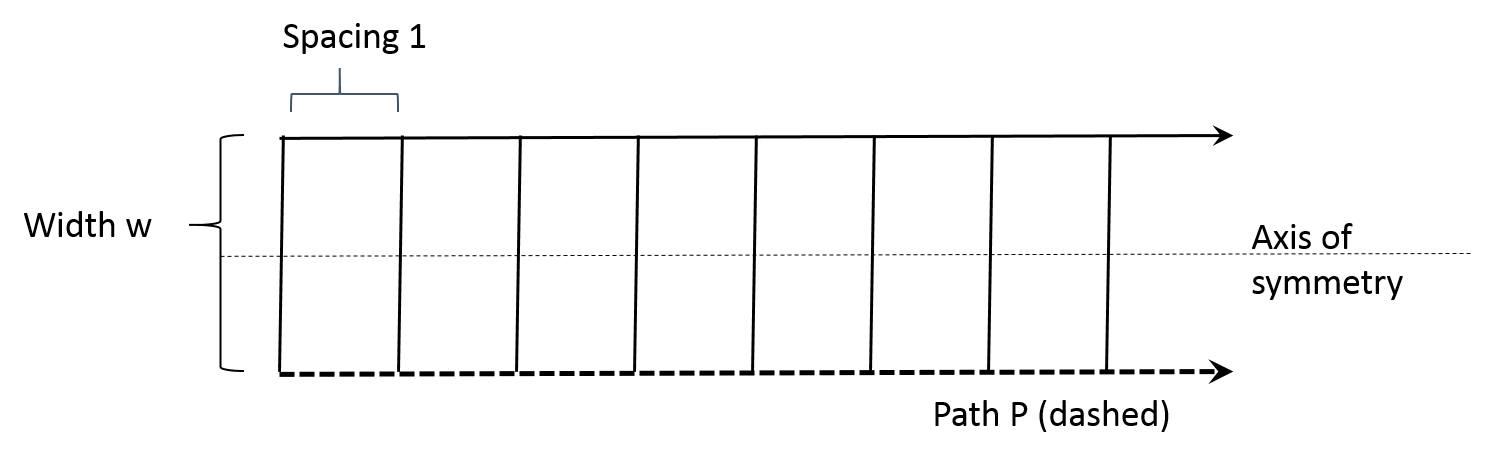}
\caption{The Ladder Graph}
\label{f:ladder}
\end{figure}

\subsubsection{Symmetric:}
 Let us take two copies of the half-line with a decaying solution. Then let us connect them at the integers with edges (rungs of the ladder) on which the solution is a constant. The result is indeed an eigensolution with the Kirchoff boundary condition at the vertices and it decays exactly at the same rate as solutions on the half-line.

\subsubsection{Antisymmetric:} One can construct a different solution on the ladder that decays faster than the above. The ladder is symmetric under reflection across a line connecting the midpoints of the rungs (the axis of symmetry is as marked in Figure \ref{f:ladder}). We assume that the rungs have length $w$ and are located at the integers. We want to construct an exterior eigenfunction which is odd under this reflection. We assume work with $V = 0$ and $E = -1$. This yields that if we parametrize each rung by $t$ with $t= 0$ in the middle, then the solution on the rung should be $\sinh t$ so that it is 0 at 0. Choosing a path along the bottom of the ladder, we compute the transfer matrix to be
$$
T = \left(
\begin{array}{cc}
\cosh 1 & \sinh 1\\
\sinh 1 + \cosh 1 \coth \frac{w}{2} & \cosh 1 + \sinh 1 \coth \frac{w}{2}
\end{array}
\right).
$$

We find that $\det T = 1$ and $\tr \, T = 2 \cosh 1 + \gamma$ where $\gamma = \coth \frac{w}{2} \sinh 1 \in (\sinh 1, \infty)$. The characteristic equation is
\[
\lambda^2 - \tr \, T \lambda + 1 = 0.
\]

If we let $\tr \,T = t$ be a parameter and differentiate the characteristic equation in $t$ we get
\[
2 \lambda \lambda ' - t \lambda' - \lambda = 0
\]
yielding that
\[
\lambda' = \frac{\lambda}{2(\lambda - \tr T/2)}
\]
which is negative for $\lambda_-$ since $\lambda_-  < \tr \, T/2$. This shows that $\lambda_-$ is monotonically decreasing in $\gamma$. One can see that $\lim_{\gamma \rightarrow \infty}\lambda_- = 0$ and $\gamma = 0$ if and only if $\lambda_- = e^{-1}$ (the 1D value). Therefore, $\lambda_- < e^{-1}$ and thus the solution satisfies a bound of the form $g e^{-|\ln \lambda_-|x}$ where $g$ is periodic and $|\ln \lambda_-|>1$.

\medskip
 \noindent
{\bf Acknowledgments}. We a grateful to the organizers of the workshop Laplacians and Heat Kernels: Theory and Applications (15w5110) at BIRS, Banff, Canada, where a lot of this work took place, and to Peter Kuchment and Gregory Berkolaiko for helpful discussions. AM acknowledges the support of the Leverhulme Trust Early Career
Fellowship (ECF 2013-613).


\begin{thebibliography}{99}

\bibitem{Agm}
S. Agmon, Lectures on exponential decay, Princeton Univ. Press, Mathematical Notes {\bf 29}.  Princeton, 1982.

\bibitem{BeKu}
G. Berkolaiko, P. Kuchment, Introduction to quantum graphs, Amer. Math. Soc. Math. Surv. Monog. {\bf 186}.  Providence, 2013.

\bibitem{BiRo}
G. Birkhoff and G.-~C Rota, Ordinary differential equations, 4th Ed., John Wiley \& Sons, New York, 1989.

%\bibitem{Boh}
%P. Bohl, \emph{\"Uber eine Differentialgleichung der St\"orungstheorie},  J. reine angew. Math. {\bf 131} (1906) 268--321.
%
%\bibitem{BoEgRu}
%\textcolor{red}{XXX NEED TO CITE THIS?}
%J. Bolte, S. Egger, and R. Rueckriemen, \emph{Heat-kernel and resolvent asymptotics for Schr\"odinger opertors on metric graphs}, arxiv:1406.1045.
%They consider only the case of a finite number of edges.

\bibitem{CoLe}
E.~A. Coddington and N. Levinson, Theory of ordinary differential equations, McGraw-Hill, New York, 1955.

%\bibitem{DaHa} E.~B. Davies and E.~M Harrell II, \emph{Conformally flat Riemannian
%metrics, Schr\"odinger operators, and semiclassical approximation}, J. Diff. Eqs.
 %\textbf{66} (1987) 165--188.

\bibitem{Gei}
L. Geisinger,
\emph{Poisson eigenvalue statistics for random Schr\"dingier operators
on regular graphs}, preprint 2014.  arXiv:1406.1608

%
% \bibitem{HaWo13}
%\textcolor{red}{XXX CITE THIS?}
%E.~M Harrell II and M.~L. Wong,
%\emph{On a transformation of Bohl and its discrete analogue}, Proc. Symp. Pure Math.  \textbf{87} (2013) 191--203.
%
%\bibitem{HaWo14}
%\textcolor{red}{XXX CITE THIS?}
%E.~M Harrell II and M.~L. Wong,
%\emph{On the behavior at infinity of solutions to difference equations in Schr\"odinger form}, Operators and Matrices  \textbf{8} (2014) 357--387.

\bibitem{HiPo}
P. Hislop and O. Post,  \emph{Anderson localization for radial tree-like random quantum graphs},
Waves in Random and Complex Media \textbf{19} (2009) 216--261.

\bibitem{HiSi}
P. Hislop and I. M. Sigal, Introduction to spectral theory, with applications to
Schr\"odinger operators, Springer Applied Mathematical Sciences 113.  New York:  Springer-Verlag, 1996.

\bibitem{KoSc}
V. Kostrykin and  R. Schrader, \emph{Kirchhoff's rule for quantum wires},
J. Phys. A: Math. Gen. \textbf{32} (1999) 595--630.

\bibitem{Kuc}
P. Kuchment, \emph{Quantum graphs: I. Some basic structure}, Waves in Random Media, \textbf{14:1} (2004) S107--S128.

\bibitem{NaSo}
K. Naimark and M. Solomyak Eigenvalue estimates for the weighted Laplacian on metric trees Proc. Lond.
Math. Soc. 80 (2000) 690--724.

\bibitem{Olv}  F.~W.~J. Olver, Asymptotics and special functions.  New York:  Academic Press, 1974.

\bibitem{RS2}
{M. Reed and B. Simon, Methods of modern mathematical physics, II. Fourier analysis, Self-adjointness, Academic Press, New York, 1975.}

\bibitem{RS4}
M. Reed and B. Simon, Methods of modern mathematical physics, IV. Analysis  of operators, Academic Press, New York, 1978.

\bibitem{Sol}
M. Solomyak On the spectrum of the Laplacian on regular metric trees Waves Random Media 14 (2004) S155Ð71.

\end{thebibliography}
\end{document}